\documentclass[twocolumn,prl,secnumarabic,nobibnotes,superscriptaddress,aps,10pt]{revtex4-2}

\usepackage[utf8]{inputenc}
\usepackage{CJK}
\usepackage{graphicx}
\usepackage{dcolumn}
\usepackage{bm}
\usepackage{mathtools}
\usepackage{amssymb,amsmath,amsfonts,dsfont, amsthm,pifont}
\theoremstyle{definition}
\newtheorem{lemma}{Lemma}
\newtheorem{theorem}{Theorem}

\newtheorem{definition}{Definition}
\newtheorem{assumption}{Assumption}

\newtheorem{corollary}[theorem]{Corollary}
\usepackage{multirow}
\usepackage{makecell}
\usepackage{hyperref}
\usepackage[capitalize]{cleveref}
\usepackage{xcolor}
\usepackage[ruled, lined, linesnumbered, commentsnumbered, longend]{algorithm2e}
\usepackage{tikz}
\usepackage{makecell}
\setcellgapes{2pt}

\usetikzlibrary{arrows.meta, positioning}

\newcommand{\ud}{\,\mathrm{d}}

\newcommand{\DeptMath}{Department of Mathematics, University of California, Berkeley, CA 94720, USA}
\newcommand{\LBLMath}{Applied Mathematics and Computational Research Division, Lawrence Berkeley National Laboratory, Berkeley, CA 94720, USA}
\newcommand{\Michigan}{Department of Mathematics, University of Michigan, Ann Arbor, MI 48109, USA}
\newcommand{\Pennstate}{Department of Mathematics, Pennsylvania State University, University Park, PA 16802, USA}
\newcommand{\FlatironCCQ}{Center for Computational Quantum Physics, Flatiron Institute, New York, NY 10010, USA}
\newcommand{\FlatironCCM}{Center for Computational Mathematics, Flatiron Institute, New York, NY 10010, USA}
\begin{document}


\title{Provably Efficient Long-Time Exponential Decompositions of Non-Markovian Gaussian Baths}

\author{Zhen Huang}
\affiliation{\DeptMath}
\author{Zhiyan Ding}
\affiliation{\Michigan}
\author{Ke Wang}
\affiliation{\Michigan}
\author{Jason Kaye}
\affiliation{\FlatironCCQ}
\affiliation{\FlatironCCM}
\author{Xiantao Li}
\affiliation{\Pennstate}
\author{Lin Lin}
\thanks{linlin@math.berkeley.edu}
\affiliation{\DeptMath}
\affiliation{\LBLMath}

\date{\today}

\begin{abstract}
Gaussian baths are widely used to model non-Markovian environments, yet the cost of accurate simulation at long times remains poorly understood, especially when spectral densities exhibit nonanalytic behavior as in a range of realistic models. We rigorously bound the complexity of representing bath correlation functions on a time interval $[0,T]$ by sums of complex exponentials, as employed in recent variants of pseudomode and hierarchical equations of motion methods. These bounds make explicit the dependence on the maximal simulation time $T$, inverse temperature $\beta$, and the type and strength of singularities in an effective spectral density. For a broad class of spectral densities, the required number of exponentials is bounded independently of $T$, achieving time-uniform complexity. The $T$-dependence emerges only as polylogarithmic factors for spectral densities with strong singularities, such as step discontinuities and inverse power-law divergences. The temperature dependence is mild for bosonic environments and disappears entirely for fermionic environments. Thus, the true bottleneck for long-time simulation is not the simulation duration itself, but rather the presence of sharp nonanalytic features in the bath spectrum. Our results are instructive both for long-time simulation of non-Markovian open quantum systems, as well as for Markovian embeddings of classical generalized Langevin equations with memory kernels.
\end{abstract}

\maketitle

\emph{Introduction.--}
The model of a quantum system linearly coupled to a Gaussian environment is widely used across quantum optics \cite{RivasHuelga2012,Forn-DiazLamataRicoKonoetal2019}, condensed matter physics \cite{GeorgesKotliarKrauthetal1996,HaertleCohenReichmanetal2013}, and chemical and biological physics \cite{MukamelAbramavicius2004,TanakaTanimura2009}. While Markovian approximations capture short-time dynamics and weak system-bath coupling, many relevant regimes extend beyond weak coupling and are intrinsically non-Markovian. Such non-Markovian effects can be captured by \emph{pseudomodes}~\cite{Garraway1997,Dorda2014, Dorda2015,WoodsPlenio2016,RosenbachCerrilloHuelgaetal2016,MascherpaSmirneHuelgaetal2017, TamascelliSmirneHuelgaPlenio2018,TamascelliSmirneLimetal2019,
NusselerDhandHuelgaPlenio2020, Tanimura2020,XuYanShietal2022, ParkHuangZhuetal2024, ThoennissVilkoviskiyAbanin2024, lednev2024lindblad, HuangParkChanLin2026}, which are auxiliary degrees of freedom that reproduce the bath correlation functions (BCFs) and enable efficient simulation schemes.

The computational scaling of different pseudomode constructions can vary widely, particularly for simulations at long simulation times ($T$). Hamiltonian-based pseudomode constructions such as TEDOPA suffer from polynomial scaling with $T$ \cite{DeVegaSchollwoeckWolf2015,WoodsPlenio2016,RosenbachCerrilloHuelgaetal2016,TamascelliSmirneLimetal2019,NusselerDhandHuelgaPlenio2020} which has become an obstruction towards practical long-time simulations.  Similarly, hierarchical equations of motion (HEOM) methods \cite{TanakaTanimura2009,Tanimura2020} also suffer from similar prohibitive scalings in $T$. Recent advances such as quasi-Lindblad pseudomodes \cite{ParkHuangZhuetal2024,ThoennissVilkoviskiyAbanin2024}, coupled-Lindblad pseudomodes~\cite{lednev2024lindblad,HuangParkChanLin2026}, and free-pole HEOM (FP-HEOM) \cite{XuYanShietal2022} represent BCFs as optimized sums of exponentials with complex weights and poles. This enables efficient long-time simulations across a wide range of physical systems.

Despite their practical success, rigorous complexity analysis has only appeared recently: the sharpest available bounds in the analytic setting were obtained in \cite{ThoennissVilkoviskiyAbanin2024} and rely on strong analyticity assumptions on the spectral density. Many physically relevant settings do not satisfy such assumptions. For instance, the density of states of periodic systems generically exhibits van Hove singularities~\cite{VanHove1953}: inverse power-law divergences at band edges in one dimension, logarithmic divergences in two dimensions, and a square-root cusp in three dimensions.

Even within the analytic setting, existing upper bounds are not tight. As an illustrative example, we consider the spin--boson model with an Ohmic spectral density \(J(\omega)=\omega e^{-\omega}\) at zero temperature. Figure~\ref{fig:n_vs_t} shows the number of bath modes \(N\) required to achieve a fixed accuracy \(\varepsilon=0.01\) in the $L^1$ norm for reproducing the bath correlation function over the time interval $[0,T]$, plotted as a function of the maximal simulation time $T$. For comparison, we include three reference scalings, $N\propto T$, $N\propto \log T$, and $N\propto (\log T)^2$ (the upper bound established for this setting in \cite{ThoennissVilkoviskiyAbanin2024}). Numerical results (for details, see \cite[\cref{sec:numerics}]{supp}) suggest that the required number of bath modes can be asymptotically independent of $T$, which would lead to a significant saving in simulation cost.

\begin{figure}[h]
    \includegraphics[width=3.25in]{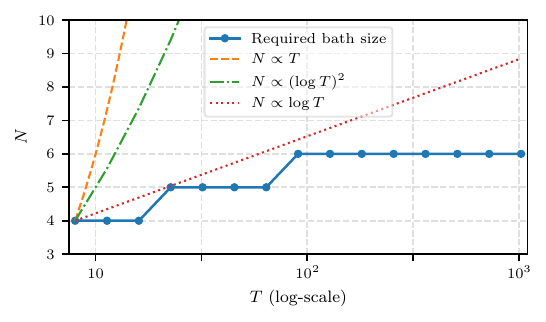}
    \caption{Number of bath modes $N$ required to represent the bath of an Ohmic spin-boson model at zero temperature with fixed accuracy $\varepsilon = 0.01$ in $L^1$ norm, as a function of the maximum simulation time $T$.}
    \label{fig:n_vs_t}
\end{figure}

\begin{table*}[ht]
\begin{tabular}{|c|cl|cl|cl|cl|}
\hline
 &
  \multicolumn{2}{c|}{\begin{tabular}[c]{@{}c@{}}Super-Ohmic \\ Bosonic bath\end{tabular}} &
  \multicolumn{2}{c|}{\begin{tabular}[c]{@{}c@{}}Sub-Ohmic \\ Bosonic bath\end{tabular}} &
  \multicolumn{2}{c|}{\begin{tabular}[c]{@{}c@{}}Gapped \\ Fermionic bath\end{tabular}} &
  \multicolumn{2}{c|}{\begin{tabular}[c]{@{}c@{}}Gapless \\ Fermionic bath\end{tabular}} \\ \hline
\begin{tabular}[c]{@{}c@{}}Zero\\ Temperature\end{tabular} &
  \multicolumn{2}{c|}{$O\left(\log^2\left(\frac 1{\omega_c\varepsilon}\right)\right) $} &
  \multicolumn{2}{c|}{$O\left(\log^2\left(\frac 1{\omega_c\varepsilon} \right) \right)$} &
  \multicolumn{2}{c|}{\multirow{3}{*}{$O\left(\log^2\left(\frac{1}{\omega_c\varepsilon}\right)\right)$}} &
  \multicolumn{2}{c|}{\multirow{3}{*}{$O\!\left(\log\!\left( \frac{\log(\omega_c T)}{\omega_c\varepsilon}\right)\,\log\!\left(\frac{ T}{\varepsilon}\right)\right)$}} \\ \cline{1-5}
\begin{tabular}[c]{@{}c@{}}Finite\\ Temperature\end{tabular} &
    \multicolumn{2}{c|}{$ {O\left( \log^2\left((1+\frac{1}{\beta\omega_c})\cdot\frac {1}{\omega_c\varepsilon} \right)  \right)}$} &
    \multicolumn{2}{c|}{$ {O\left(\log^2\left(\frac 1{\omega_c\varepsilon} \right) + \log^2\left(\frac{1}{\beta\omega_c}\cdot\frac{T}{\varepsilon}\right)\right)}$} &
  \multicolumn{2}{c|}{} &
  \multicolumn{2}{c|}{} \\ \hline
\end{tabular}
\caption{Summary of the complexity bounds for representing bath correlation functions of various physically relevant systems, with $L^1$ accuracy $\varepsilon$ on $[0,T]$. Here $\omega_c$ is a characteristic frequency scale of the bath (for example, see \cref{eq:Ohmic} for its definition in the Ohmic-family spectral densities). The bounds are independent of $T$ for super-Ohmic bosonic baths and gapped fermionic baths at all temperatures and for sub-Ohmic bosonic baths at zero temperature, and depend on $T$ only through polylogarithmic factors for finite temperature sub-Ohmic bosonic baths and gapless fermionic baths.  {For simplicity, we have omitted the results for Ohmic bosonic baths in this table, which are $O\left(\log^2(\frac{1}{\omega_c\varepsilon}) + \log(\frac{\log(\omega_c T)}{\beta \omega_c^2 \varepsilon} )\log(\frac{T}{\beta \omega_c \varepsilon})\right)$.}
}
\label{tab:summary}
\end{table*}

We note that the availability of compact sum-of-exponentials (SOE) representations of imaginary-time functions with provably logarithmic scaling in inverse temperature has enabled many algorithmic developments in equilibrium quantum many-body calculations. Approaches such as the discrete Lehmann representation (DLR) \cite{kaye2022discrete} and various pole-fitting \cite{NakatsukasaSeteTrefethen2018,Mejuto-ZaeraZepeda-NunezLindseyetal2020,shinaoka21,HuangGullLin2023,Ying2022} schemes have led to efficient algorithms for  
diagrammatics \cite{KayeHuangStrandGolez2024,HuangDenisStrandKaye2025,GazizovaZhangGullLeBlanc2024,gazizova25}, analytic continuation \cite{HuangGullLin2023,zhang24,ZhangYuGull2024,ying22,Ying2022}, and dynamical mean-field theory (DMFT) calculations \cite{georges96,labollita25,sheng23}, in which many impurity solvers, such as those based on exact diagonalization \cite{caffarel94,liebsch11}, diagrammatics \cite{KayeHuangStrandGolez2024,HuangDenisStrandKaye2025}, matrix product states \cite{wolf15}, and neural network learning \cite{valenti26} all
require a compact bath discretization as input. A natural and frequently raised question is the efficiency of similarly compact representations in nonequilibrium real-time calculations (such as \cite{XuYanShietal2022, ParkHuangZhuetal2024, HuangParkChanLin2026, ZhangErpenbeckYuetal2025,PaprotzkiEckstein2025}). This work addresses that question by establishing rigorous complexity bounds.

In this Letter, we present a unified, rigorous framework for determining the simulation complexity of real-time Gaussian environments.
Our approach advances prior analyses in various ways. We establish complexity bounds that remain valid in the presence of spectral singularities (see \cref{tab:summary}). For many physically relevant regimes, we prove how the number of bath modes $N$ required scales with maximum simulation time $T$. For fixed target $L^1$ accuracy of the bath correlation functions on $[0,T]$, the long-time scaling depends on the singularity type: $N$ is independent of $T$ for mild singularities (e.g.\ square-root); $N=O(\log T\,\log\log T)$ for step discontinuities or logarithmic singularities; and, in the worst case, $N=O((\log T)^2)$ for inverse power-law divergences. 

\emph{Setup.--} A Gaussian environment is described by its spectral density $J(\omega)$, which is a non-negative, piecewise smooth $L^1$ function that is compactly supported or exponentially decaying. The lesser and greater bath correlation functions (for fermionic problems, they are also known as hybridization functions) are:
\begin{equation}
    \begin{aligned}
        \Delta^<(t)&=\int_{-\infty}^\infty J(\omega)f(\omega-\mu)\,\mathrm e^{-\mathrm i\omega t}\ud\omega,\\
        \Delta^>(t)&=\int_{-\infty}^\infty J(\omega)\bigl[1\pm f(\omega-\mu)\bigr]\,\mathrm e^{-\mathrm i\omega t}\ud\omega.
    \end{aligned}
\end{equation}
Here $f(\omega)$ is the Bose--Einstein function $f_{\text{BE}}(\omega) = \frac{1}{e^{\beta \omega}-1}$ for bosonic baths, and the Fermi--Dirac function $f_{\text{FD}}(\omega) = \frac{1}{e^{\beta \omega}+1}$ for fermionic baths; the sign $+$ corresponds to bosons and $-$ to fermions. $\beta$ is the inverse temperature and $\mu$ is the chemical potential. For bosonic problems, $\mu$ should lie below the bottom of the support of $J(\omega)$ (or, if $\mu$ coincides with a band edge, that $J(\omega)$ vanishes sufficiently fast there) to avoid a divergent occupation; it is conventional to set $\mu=0$ and take $J$ supported on $[0,\infty)$.

For bosonic baths we work with the total BCF
\begin{equation}\label{eqn:separation}
    \Delta(t)=\Delta^>(t)+\Delta^<(-t)=\int_{-\infty}^\infty J_{\text{eff}}(\omega)\,\mathrm e^{-\mathrm i\omega t}\ud\omega,
\end{equation}
with $J_{\text{eff}}(\omega)=\operatorname{sign}(\omega)\frac{J(|\omega|)}{1-\mathrm e^{-\beta\omega}}$.
For fermions, set $J_{\text{eff}}^<(\omega)=J(\omega)f_{\mathrm{FD}}(\omega-\mu)$ and $J_{\text{eff}}^>(\omega)=J(\omega)[1-f_{\mathrm{FD}}(\omega-\mu)]$ so that $\Delta^{>,<}(t)=\int_{-\infty}^\infty J_{\text{eff}}^{>,<}(\omega)\,\mathrm e^{-\mathrm i\omega t}\ud\omega$. Here $f_{\mathrm{FD}}$ is the Fermi--Dirac function. 
In what follows, we will omit the $>,<$ superscript since the analysis holds for both cases.

To find a compact representation of the bath, our goal is to approximate $\Delta(t)$ by a sum of exponentials:
\begin{equation}
        \Delta(t) = \sum_{j=1}^{N} c_j e^{-\mathrm i z_j t} +\delta(t),  \quad t\in[0,T],
        \label{eq:SOE}
\end{equation}
such that the error $\delta(t)$ is small. We say the approximation on $[0,T]$ has $\varepsilon$ accuracy in $L^1$ norm if $\int_0^T |\delta(t)| \,\ud t \le \varepsilon$. According to \cite{MascherpaSmirneHuelgaetal2017,HuangLinParkZhu2024}, controlling the $L^1$ bound allows us to control errors of any bounded system observables. Thus the problem of simulating Gaussian environments reduces to approximating BCFs by sums of exponentials with controlled errors.

\emph{Conformal mapping and exponentially clustering poles.}
Given $J_{\text{eff}}(\omega)$, we will treat the frequency domain $\omega>\mu$ and $\omega<\mu$ separately. If $J_{\text{eff}}(\omega)$ is not analytic globally but only piecewise analytic, we will also consider each analytic segment separately.
 The overall complexity is then the sum of the complexities for each analytic segment, which is dominated by the segment with the strongest singularity. In what follows, we use the terms {\emph{exponents}}, {\emph{poles}}, and {\emph{quadrature nodes}} interchangeably, as they all refer to the same quantities $z_j$'s defined in \cref{eq:SOE}.

The key idea of our analysis is illustrated in \cref{fig:contour}. For each smooth segment of $J_{\text{eff}}(\omega)$, we assume it admits a holomorphic extension to a domain $\Omega$ in the lower half-plane, with a semi-circular–like shape as illustrated by the gray region in \cref{fig:contour}(a). (the gray region in \cref{fig:contour}(a)). We note that this is a much weaker analyticity assumption compared to previous works \cite{ThoennissVilkoviskiyAbanin2024} (see \cref{fig:contour}(c) for comparison). To resolve possible endpoint singularities of each analytic segment, we place quadrature nodes $z_j$'s in \cref{eq:SOE} (illustrated as triangles in \cref{fig:contour}(a)) in the lower half-plane that are exponentially clustered toward the segment endpoints. Such choice of $\{z_j\}$ can be viewed as defining a nonuniform quadrature rule for integration along a contour deformed into the lower half-plane. In the Support Information (SM) \cite[\cref{sec:mainproof}]{supp}, we show that such a choice achieves the desired accuracy using a conformal map that sends the semi-elliptic region to a strip in the lower complex plane, in which the corresponding quadrature nodes are uniform (see \cref{fig:contour}(b)).Moreover, the analytic region guarantees that there is an $O(\frac{1}{\beta})$ distance from the poles of the Fermi-Dirac functions, which are the Matsubara frequencies $\nu_n = \mu - \mathrm i \frac{\pi (2n+1)}{\beta}$. This property is crucial in proving the asymptotic $\beta$-independence of the required number of poles for fermionic baths.

As noted earlier, the complexity (i.e., the number of exponentially clustered poles required near each singularity) depends on the strength of the singularity.
This dependence is rigorously formulated in SM \cite[Corollary \ref{cor:main}]{supp} and in \cite[\cref{eq:main}]{supp}, which we summarize below. In essence, the required number of poles scales logarithmically with a prefactor related to the $L^1$ norm of the bath correlation function (see \cite[\cref{sec:mainproof}]{supp}).
For weak singularities, such as $J_{\text{eff}}(\omega)\sim |\omega-\omega_0|^{\alpha}$ with $\alpha>0$, its Fourier transform decays faster than $1/t$  and  is therefore $L^1$ integrable on $[0,\infty)$, leading to a $T$-independent complexity: $O(\log^2(1/\omega_c\varepsilon))$. Here $\omega_c$ is the characteristic frequency scale of the bath.
In contrast, for a stronger singularity such as an inverse power-law behavior $J_{\text{eff}}(\omega)\sim |\omega-\omega_0|^{-\alpha}$ with $0<\alpha<1$, the hybridization function decays slower than $1/t$, and the complexity is polylogarithmic in $T$: $O(\log^2(T/\varepsilon))$. For intermediate singularities such as step or logarithmic singularities, the complexity is $O\left(\log\left(\frac{\log(\omega_cT)}{\omega_c\varepsilon}\right)\log\left(\frac{T}{\varepsilon}\right)\right)$. We refer to SM \cite[\cref{sec:setup}]{supp} for precise definitions of singularity types and \cite[\cref{sec:mainproof}]{supp} for detailed rigorous proofs.

\emph{$\log^2 T$-dependence in a pre-asymptotic regime.}  We focus on the asymptotic scaling of the required number of bath modes $N$ with respect to the maximum simulation time $T$. For mild singularities $J_{\text{eff}}(\omega)\sim |\omega-\omega_0|^{\alpha}$ with $\alpha>0$, the asymptotic bound is $T$-independent because the $L^1$ norm of the tail is uniformly bounded as $T\to\infty$. Therefore, once the approximation is made sufficiently accurate on $[0,T_*]$, where the tail $\int_{T_*}^{\infty}|\Delta(t)|\,\ud t$ is sufficiently small, that approximation remains valid to within the specified tolerance $\varepsilon$ for all later times. Of course, this threshold time increases as $\varepsilon$ decreases, which accounts for the $\varepsilon$-dependence appearing in all of the complexity results. We note that this $T$-independence effect does not preclude an intermediate regime before saturation in which the minimal number of poles still grows with $T$; in practice this growth can resemble $\log T$, as seen in \cref{fig:n_vs_t}. Thus the asymptotic $T$-independent complexity for mild singularities is consistent with the pre-asymptotic behavior often observed in computations, including \cref{fig:n_vs_t}.

\begin{figure}[ht]
    \includegraphics[width=3.25in]{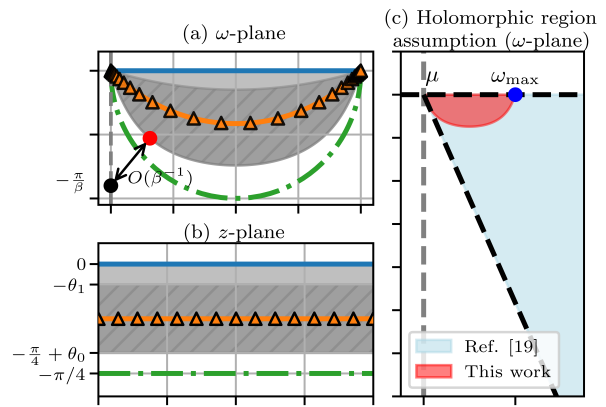}
    \caption{(a,b) Contour transformation from the $\omega$-plane to the $z$-plane. The gray area is the holomorphic region $\Omega$ of $J_{\text{eff}}(\omega)$, and the hatched area is the analyticity strip used in our analysis. Corresponding contours are illustrated in the same color. (c) Comparison of the holomorphic region in our analysis and that in Ref. \cite{ThoennissVilkoviskiyAbanin2024}.
    }
    \label{fig:contour}
\end{figure}

\begin{figure*}[ht]
    \includegraphics[width=170mm]{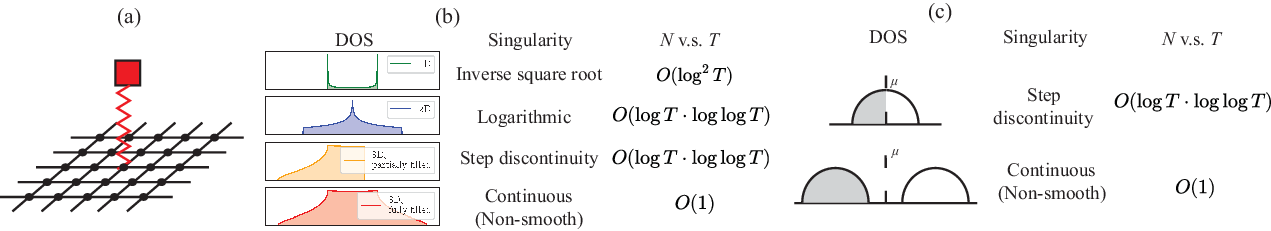}
    \caption{(a) A system coupled to a fermionic periodic lattice. (b) Spectral density in the case of 1D, 2D and 3D lattices exhibiting the van Hove singularities, and corresponding complexity scalings in $T$. (c) Gapless and gapped spectral densities with power-law singularities at the band edge, and corresponding complexity scalings in $T$.}
        \label{fig:VanHove}
\end{figure*}

\emph{Temperature independence.}
The temperature dependence of the BCFs enters through Bose-Einstein or Fermi-Dirac functions in the effective spectral density, affecting the prefactor of exponential convergence. The Fermi-Dirac function is bounded in the analyticity region, so the prefactor (and the final complexity) is $\beta$-independent for fermionic systems. For bosonic systems, the Bose-Einstein function diverges at $\omega=0$, leading after nondimensionalization to a prefactor that scales as $O(1+1/(\beta\omega_c))$ and hence only logarithmic dependence on temperature in the final complexity. In the physically relevant regime $\beta\omega_c\gtrsim 1$, this prefactor remains $O(1)$. For details, see the SM \cite[\cref{sec:looseend}]{supp}.

\emph{Sub-Ohmic, Ohmic and Super-Ohmic bosonic baths.}
The first physical example is the Ohmic family of spectral densities, which has been widely used to model light-matter interactions and quantum dissipation for decades \cite{LeggettChakravartyDorseyetal1987}. The spectral density is given by
\begin{equation}
    J(\omega) = \omega^{\gamma}\omega_c^{-(\gamma+1)}\mathrm e^{-\omega /\omega_c }, \quad \omega \in [0,\infty),\quad \gamma>0.
    \label{eq:Ohmic}
\end{equation}
For $\gamma>1$, $J$ is said to be super-Ohmic, $\gamma=1$ is Ohmic, and $\gamma<1$ is sub-Ohmic. Consider a continuous cutoff function $\chi_W(\omega)$ that is compactly supported on $[0,W]$:
$$
\chi_W(\omega) = \left\{\begin{aligned}
    &\frac{1 -\mathrm e^{-(W - \omega)/ \omega_c } }{1 - \mathrm e^{-W/\omega_c}}, \quad \omega\in[0,W],\\
    &0, \quad \omega \in [W,\infty).
\end{aligned}\right.
$$
We choose a large enough $W$ such that $J(\omega)$ is very close to $J_{\text{trunc}}(\omega) =J(\omega)\chi_W(\omega)$. This finite truncation argument could be made rigorous by carefully bounding the $L^1$ truncation error (see SM \cite[\cref{proposition:truncation}]{supp}).

At zero temperature, we can see that since $\gamma>0$, $J_{\text{eff}}(\omega) = J(\omega)\chi_W(\omega)$ has singularity order $\gamma$ at $\omega=0$ and singularity order $1$ at $\omega=W$. Thus the overall singularity order is $\alpha = \min\{\gamma,1\}$, and  the complexity is independent of $T$ for all Ohmic families. At finite temperature, however, there are qualitative differences for different regimes of $\gamma$. Since $J_{\text{eff}}(\omega)\sim |\omega|^{\gamma-1}/\beta$ as $\omega\to 0$, the singularity order becomes $\alpha = \min\{\gamma-1,1\}$. Thus the super-Ohmic, Ohmic and sub-Ohmic cases each exhibit qualitatively different complexity scalings in $T$ for $L^1$ accuracy: no $T$-dependence for super-Ohmic baths, $N\sim O(\log (T/\varepsilon)\log(\log(T)/\varepsilon))$ for Ohmic baths, and $N\sim O((\log (T/\varepsilon))^2)$ for sub-Ohmic baths.

To the best of our knowledge, this is the first rigorous complexity bound that applies across the Ohmic family while distinguishing the super-Ohmic, Ohmic, and sub-Ohmic regimes through their different $T$-dependences. Our result, as summarized in \cref{tab:summary}, thus provides a novel theoretical perspective on how these regimes differ: the $T$-dependence of the bath-representation complexity is qualitatively distinct in the sub-Ohmic, Ohmic, and super-Ohmic cases.

\emph{Van Hove singularities.}
Next, let us consider a system coupled to an $n$-dimensional periodic lattice, which is a tight-binding lattice of noninteracting electrons, as illustrated in \cref{fig:VanHove}(a). The spectral density exhibits the dimension-dependent van Hove singularities considered above (\cref{fig:VanHove}(b)), all of which are covered by our analysis. In 1D, the spectral density has inverse square-root singularities at the band edges, namely $J_{\text{eff}}(\omega) \sim |\omega-\omega_0|^{-1/2}$, producing the worst-case scaling $N=O\big(\log^2(T/\varepsilon)\big)$. In 2D, the logarithmic singularity $J_{\text{eff}}(\omega) \sim \log|\omega|$ and the step discontinuity at the band edge both give the intermediate scaling $N=O\big(\log(T/\varepsilon)\,\log\log(T/\varepsilon)\big)$. In 3D, if the band is partially filled, the jump discontinuity at the chemical potential $\mu$ leads to the $O(\log(T/\varepsilon)\,\log\log(T/\varepsilon))$ scaling. Otherwise, if the band is fully filled, then the complexity is $T$-independent.

\emph{Gapless and gapped baths.} As a final example, let us illustrate the qualitative difference between gapless and gapped spectral densities for long-time simulation. As shown in \cref{fig:VanHove}(c), the gapped spectral density vanishes continuously at $\omega=\mu$ and thus has singularity order $\alpha>0$. For a gapless bath at finite temperature, the thermal effective spectral density is itself smooth at $\omega=\mu$. However, to obtain a representation whose complexity is uniform in $\beta$, we split the frequency domain at $\mu$ and treat the two sides separately, as described above. In this segmented representation the boundary point $\omega=\mu$ behaves as an endpoint singularity of order $\alpha=0$. By \cref{tab:summary}, reproducing the BCFs for a gapped bath requires a number of modes independent of $T$, while a gapless bath demands bath modes with size growing polylogarithmically in $T$. This polylogarithmic $T$-dependence should therefore be understood as the cost of enforcing $\beta$-independent complexity for a gapless bath, rather than as a physical nonanalyticity of the finite-temperature spectrum itself. For fixed $\beta$, one may instead expect to recover $T$-independent complexity by analyzing the smooth thermal effective spectral density directly, but then the resulting constants would no longer remain uniform as $\beta$ decreases. 

\emph{Applicability to classical open systems.} Although our primary motivation is to characterize the complexity of open quantum systems, the challenge of finding compact bath representations is universal. In classical molecular dynamics, the Generalized Langevin Equation (GLE) governs the system dynamics, where the memory kernel is intrinsically determined by environmental noise fluctuations via the fluctuation-dissipation theorem. Crucially, when this memory kernel is accurately represented by a compact sum of exponentials, the GLEs, which are non-local in time, map exactly onto a system of local auxiliary differential equations~\cite{siegle2010markovian,goychuk2012viscoelastic,kupferman2004fractional}. In particular, the convolution integral with an exponential kernel can be recast as a simple differential equation, effectively eliminating the need to store the history.  This transformation reduces the simulation complexity from quadratic in time $T$ to linear. Such embeddings have proven essential for capturing complex memory effects in viscoelastic phenomena, protein folding, and hydration dynamics~\cite{dalton2023fast,jung2017iterative,daldrop2017external}. We refer the reader to the SM~\cite[\cref{sec:classical}]{supp} for specific examples where spectral singularities naturally arise in these classical contexts.

\emph{Conclusions and outlook.}
In this Letter, we have developed a general theoretical framework for analyzing the complexity of simulating Gaussian environments with general spectral densities. Our result systematically characterizes the complexity for different singularity types. This establishes a rigorous foundation for both recently developed quasi-Lindblad pseudomode methods \cite{HuangLinParkZhu2024,ThoennissVilkoviskiyAbanin2024} and the free-pole HEOM methods \cite{XuYanShietal2022}, as well as Markovian embedding approaches in classical molecular dynamics with memory kernels, precisely quantifying their complexity for long-time simulations at low temperatures.

This work also opens up several interesting directions for future research. Based on our results, we may extend prior analysis of tensor network influence functionals \cite{vilkoviskiy2024bound} beyond the Ohmic case to super-Ohmic and sub-Ohmic spectral densities. Though our analysis could be straightforwardly extended to multi-band systems, it remains unclear what the optimal complexity is in terms of the number of bands. Moreover, it is also unclear how to develop a similar complexity result for the more recent coupled-Lindblad pseudomode theories \cite{HuangParkChanLin2026}, which satisfy additional constraints to guarantee physicality of the system dynamics.

\emph{Acknowledgements.} This material is based upon work supported by the Simons Targeted Grant in Mathematics and Physical Sciences on Moir\'e Materials Magic (Z.H., L.L.), by the U.S. Department of Energy, Office of Science, Accelerated Research in Quantum Computing Centers, Quantum Utility through Advanced Computational Quantum Algorithms, grant no. DE-SC0025572 (L.L.), by a startup grant from the University of Michigan (Z.D., K.W.), by the Van Loo Postdoctoral Fellowship (K.W.), and by the NSF Grant CCF-2312456 (X.L.).  L.L. is a Simons Investigator in Mathematics. The Flatiron Institute is a division of the Simons Foundation.  We thank Andrew Millis, Olivier Parcollet, Steven White, and Maciej Zworski for helpful discussions.

\bibliographystyle{apsrev4-2}
\bibliography{apssamp,gle}
\newpage
\clearpage
\thispagestyle{empty}
\onecolumngrid
\begin{center}
\textbf{\large Support Information for  \\
Long-time simulation of general non-Markovian Gaussian environments at arbitrary temperature: theoretical analysis
}
\end{center}

\begin{CJK*}{UTF8}{mj}
\begin{center}
Zhen Huang$^1$, Zhiyan Ding$^{2}$, Ke Wang$^{2}$, Jason Kaye$^{3,4}$, Xiantao Li$^{5}$ and Lin Lin$^{1,6}$\\
\smallskip
\small{\emph{$^1$\DeptMath\\$^2$\Michigan\\$^3$\FlatironCCQ\\ $^4$\FlatironCCM\\$^5$\Pennstate\\$^6$\LBLMath}}\\
(Dated: \today)

\end{center}
\end{CJK*}
\setcounter{equation}{0}
\setcounter{figure}{0}
\setcounter{table}{0}
\setcounter{page}{1}
\makeatletter
\renewcommand{\theequation}{S\arabic{equation}}
\renewcommand{\thefigure}{S\arabic{figure}}
\renewcommand{\thetable}{S\arabic{table}}
\renewcommand{\bibnumfmt}[1]{[S#1]}
\newtheorem{proposition}{Proposition}

\setcounter{theorem}{0}
\setcounter{lemma}{0}

\setcounter{definition}{0}
\setcounter{assumption}{0}


\onecolumngrid
\setcounter{secnumdepth}{3}
\setcounter{section}{0}
\renewcommand{\thesection}{S\arabic{section}}
\renewcommand{\thesubsection}{S\arabic{section}.\arabic{subsection}}
\renewcommand{\thesubsubsection}{S\arabic{section}.\arabic{subsection}.\arabic{subsubsection}}

\section{Setup}
\label{sec:setup}
In this section, we review the problem setup and perform the nondimensionalization. Recall that in the main text, the BCF is the Fourier transform of the effective spectral density $J_{\text{eff}}(\omega)$:
$$
\Delta(t) = \int_{-\infty}^{\infty} J_{\text{eff}}(\omega) e^{-i\omega t} \mathrm{d}\omega,
$$
where $J_{\text{eff}}(\omega)$ is given by the spectral density $J(\omega)$ through the following relation:
$$
J_{\text{eff}}(\omega) =  \mathrm{sign}(\omega) \frac{J(|\omega|)}{1 - e^{-\beta  \omega } }. \quad \text{For bosons.}
$$
$$
 J^<_{\text{eff}}(\omega) = J(\omega) f_{\mathrm{FD}}(\omega-\mu), \quad J^>_{\text{eff}}(\omega) = J(\omega)  (1 - f_{\mathrm{FD}}(\omega-\mu)), \quad \text{For fermions.} 
$$
Here $J(\omega)$ is the spectral density, which is compactly supported or decays exponentially fast.
As described in the main text, we treat each analytic segment of $J_{\text{eff}}(\omega)$ separately. For fermions, we also treat the domains $\omega>\mu$ and $\omega<\mu$ separately, where $\mu$ is the chemical potential ($\mu$ is zero for bosons). Thus we assume that $J_{\text{eff}}(\omega)$ is supported on $[\omega_a, \omega_b]$ and is analytic in the semi-ellipse below $[\omega_a, \omega_b]$, as illustrated in Fig.~\ref{fig:contour}(a) (the dark grey area). For now, we assume that $\omega_a$ and $\omega_b$ are finite, and we discuss how to truncate infinite domains in \cref{sec:looseend}.
\subsection*{Nondimensionalization}
To nondimensionalize, define $\omega'$, the rescaled frequency that maps $[\omega_a, \omega_b]$ to $[-1, 1]$:
$$
\omega' = \frac{2\omega - (\omega_a + \omega_b)}{\omega_b - \omega_a},\quad \omega\in [\omega_a, \omega_b], \quad \omega'\in[-1,1].
$$
Define the rescaled spectral densities 
$$\widetilde J(\omega') = \frac{\omega_b-\omega_a }{2} J(\omega),\quad \widetilde J_{\text{eff}} (\omega') =\frac{\omega_b-\omega_a }{2} J_{\text{eff}}(\omega).$$
Then the Fourier transform of $\widetilde J_{\text{eff}}(\omega')$, denoted by $\widetilde \Delta(t)$, satisfies $$\widetilde{\Delta}\left(\left(\frac{\omega_b-\omega_a}{2} \right) t \right) = \mathrm e^{\mathrm i\frac{\omega_a + \omega_{b}}{2}t} \Delta\left(t\right).$$ To achieve $L^1$ error less than $\varepsilon$ for $\Delta(t)$ on $[0, T]$, it suffices to achieve $L^1$ error less than $W\varepsilon$ for $\widetilde \Delta(t)$ on $[0, WT]$, where $W = \frac{\omega_b-\omega_a}{2}$ is the half bandwidth. Note that since $\varepsilon$ is the $L^1$ error for $\Delta(t)$, it has the dimension of time. Thus $W\varepsilon$ and $WT$ are dimensionless quantities. In the following, we work with the rescaled quantities $\omega'\in[-1,1]$ and drop the tilde and prime for simplicity.

\subsection*{Contour deformation, change of variables and analyticity assumptions}
A key step in our analysis is the change of variables $\omega = \tanh x$, which maps the domain $\omega\in[-1,1]$ to $x\in[-\infty, \infty]$. Then we have:
$$
  \Delta(t) = \int_{-\infty}^{+\infty}  {J}_{\text{eff}}(\tanh x) \mathrm e^{-\mathrm i t \tanh x} \frac{1}{\cosh^2 x} \ud x.
$$
Let us define
\begin{equation}
    g(z;t) =  J_{\text{eff}} (\tanh z) \mathrm e^{-\mathrm i t \tanh z} \frac{1}{\cosh^2 z},
    \label{eq:g_definition}
\end{equation}
as the complexification of the integrand, for $\operatorname{Im}(z)\in[-\frac{\pi}{4}+\theta_0,0]$. Here $\theta_0>0$.
Then, by contour deformation, we can evaluate $\Delta(t)$ by integrating along the line $\operatorname{Im} z = -y$ for any $y\in [0,\frac{\pi}{4} - \theta_0]$:
\begin{equation}
  \Delta(t) = \int_{-\infty}^{+\infty} g(x - \mathrm i y; t) \ud x,\quad \forall y\in [0, \frac{\pi}{4} - \theta_0].
  \label{eq:contour_deformation}
\end{equation}
As shown in \cref{sec:mainproof}, we will consider $y = -\operatorname{Im}(z) \in [\theta_1, \frac{\pi}{4} - \theta_0]$ for some nonzero $\theta_1$ to take advantage of the decay properties in $t$.  In other words, $g(z;t)$ is required to be analytic in the strip $\mathcal S$:
\begin{equation}
    \mathcal S = \{z:\operatorname{Im} z\in [-\frac{\pi}{4}+\theta_0, -\theta_1]\}.
\end{equation}
Thus we make the following analyticity assumption on $J_{\text{eff}}(\omega)$:
\begin{assumption}
$J_{\text{eff}}(\omega)$ is analytic in the following semi-ellipse $\Omega$ below $[-1,1]$:
$$
\Omega = \{\omega: \omega = \tanh z,  \operatorname{Im} z\in [-\frac{\pi}{4}+\theta_0,0]\}.
$$
Here $\theta_0 \in (0, \frac{\pi}{4})$ is a constant.
\label{assumption:analyticity}
\end{assumption}
The region $\Omega$ is illustrated in Fig.~\ref{fig:contour}(a) (the dark grey area). Since we can always restrict to a smaller holomorphic region, without loss of generality, we may assume that $J_{\mathrm{eff}}(\omega)$ is analytic on $\overline{\Omega}$ except at $\omega=\pm 1$.

\subsection*{Singularity order $\alpha$}
In the main text we have formally introduced the notion of singularity order $\alpha$ to characterize the singularity behavior of $J_{\text{eff}}$ at its endpoint $\omega=\pm 1$. Here we give a more precise definition.

\begin{definition}[Singularity order $\alpha$]   For $\alpha>-1$ and $\alpha\neq 0$, we say a function $f(\omega)$ has a singularity of order $\alpha$ at $\omega = 1$ if there exists $C>0$ such that $|f(\omega)|\le C|\omega-1|^{\alpha}$ as $\omega\to 1$ within $\Omega$, while for every $\alpha'>\alpha$, $\frac{|f(\omega)|}{|\omega-1|^{\alpha'}}$ is unbounded. For $\alpha=0$, we require instead that there exists $C>0$ such that $|f(\omega)|\le C(1+|\log|\omega-1||)$ as $\omega\to 1$ within $\Omega$, and that for every $\alpha'>0$, $\frac{|f(\omega)|}{|\omega-1|^{\alpha'}}$ is unbounded. This class includes both logarithmic singularities and bounded jump discontinuities.  
A similar definition applies to the singularity at $\omega=-1$. For $f (\omega)$ with singularity order $\alpha_{\pm}$ at $\omega=\pm 1$, we define $\alpha = \min\{\alpha_+, \alpha_-\}$ as the overall singularity order.
\label{defn:singularity_type}
\end{definition}
With this definition, after possibly enlarging the constant to account for the compact part of $\Omega$ away from the endpoints, there exists $C_J>0$ such that the corresponding endpoint bounds hold on the two half-domains. Let 
$$\Omega_+ = \{\omega\in \Omega: \operatorname{Re}\omega \ge 0\},\qquad \Omega_- = \{\omega\in \Omega: \operatorname{Re}\omega \le 0\}. $$
 More precisely, $J_{\text{eff}}(\omega)$ of singularity order $\alpha$ satisfies the bounds
\begin{equation}
    \begin{aligned}
         |J_{\text{eff}}(\omega)| \le C_{ J} |\omega-1|^{\alpha}, \quad \text{for } \omega\in \Omega_+,\quad  \alpha>-1, \alpha\neq 0,\\
        |J_{\text{eff}}(\omega)| \le C_{ J} |\omega+1|^{\alpha}, \quad \text{for } \omega\in \Omega_-,\quad  \alpha>-1, \alpha\neq 0,\\
        |J_{\text{eff}}(\omega)| \le C_{ J} (1+ |\log(|\omega-1|)|),  \quad \text{for } \omega\in \Omega_{+}, \quad \alpha=0,\\
        |J_{\text{eff}}(\omega)| \le C_{ J} (1+ |\log(|\omega+1|)|),  \quad \text{for } \omega\in \Omega_{-}, \quad \alpha=0.
    \end{aligned}
\label{eq:J_bound}
\end{equation}

\section{Main result}
\label{sec:mainproof}
Our main result is based on the following theorem, which gives a pointwise estimate on the error of approximating $\Delta(t)$ by a sum of exponentials.
\begin{theorem}[Pointwise error estimate]
    Suppose that the effective spectral function $J_{\text{eff}}(\omega)$ has singularity of order $\alpha>-1$ at $\omega=\pm 1$ (see \cref{defn:singularity_type}). Then there exist constants $A_\alpha,c>0$, independent of $t$, and a choice of $y_2\in (\theta_1,\frac{\pi}{4}-\theta_0)$ such that for any $0<h<1$, $M>1$, and $t\geq 0$, we have
    $$
\left| \Delta(t) - \sum_{|nh|\leq M} h g(nh-\mathrm i y_2;t)\right| \leq C_{ J}\cdot \left\{
\begin{array}{ll}
    A_\alpha(\mathrm e^{-c/h} (1+t)^{-(1+\alpha)} + j_{M,\alpha}(t)), & \alpha\neq 0,\alpha>-1\\
    A_0\left(\mathrm e^{-c/h} \frac{\log(2+t)}{1+t}  + j_{M,0}(t)\right), & \alpha=0.
\end{array}
\right.
    $$
    Here $j_{M,\alpha}(t) = \min\{ \mathrm e^{-(\alpha+1)M}, (1+t)^{-(1+\alpha)}\}$ for $\alpha\neq 0$, $j_{M,0}(t) =  \mathrm e^{- M}$, and $C_{J}$ is the model-dependent constant in \cref{eq:J_bound}.
    \label{thm:pointwise_error}
\end{theorem}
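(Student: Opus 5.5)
We split the error into a discretization part and a truncation part:
\[
\Delta(t) - \sum_{|nh|\le M} h\, g(nh-\mathrm i y_2;t) = \Bigl[\int_{\mathbb R} g(x-\mathrm i y_2;t)\ud x - \sum_{n\in\mathbb Z} h\, g(nh-\mathrm i y_2;t)\Bigr] + \sum_{|nh|>M} h\, g(nh-\mathrm i y_2;t),
\]
where the first bracket uses the contour deformation \cref{eq:contour_deformation}. Both pieces will be bounded using pointwise estimates on $g$ in the strip $\mathcal S$.

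\textbf{Key estimates on $g$.} Write $z=x-\mathrm i y$. We need two facts.
\begin{itemize}
\item[(i)] \emph{Decay in $t$.} We have $|\mathrm e^{-\mathrm i t\tanh z}| = \mathrm e^{t\operatorname{Im}\tanh z}$. A direct computation gives
\[
\operatorname{Im}\tanh(x-\mathrm i y) = \frac{-\sin 2y}{\cosh 2x+\cos 2y}.
\]
This is negative for $y\in(0,\pi/4)$ and behaves like $-c'\mathrm e^{-2|x|}$ for large $|x|$.
\item[(ii)] \emph{Decay in $|x|$ from the singularity.} We have $1-\tanh z = 2\mathrm e^{-2z}/(1+\mathrm e^{-2z})$, so $|1\mp\tanh z|\asymp \mathrm e^{-2|x|}$ on the relevant half. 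Also $|\cosh z|^{-2}\asymp \mathrm e^{-2|x|}$.
\end{itemize}
Combining these with \cref{eq:J_bound} gives, uniformly for $y$ in a compact subset of $(0,\pi/4-\theta_0]$,
\[
|g(x-\mathrm i y;t)| \le C\, C_J\, \mathrm e^{-2(1+\alpha)|x|}\exp\!\bigl(-c'\,t\,\mathrm e^{-2|x|}\bigr).
\]
For $\alpha=0$ there is an extra factor $(1+|x|)$.

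Integrating this bound over $x$ with the substitution $s=\mathrm e^{-2|x|}$ reduces it to $\int_0^1 s^{\alpha}\mathrm e^{-c' t s}\,\mathrm ds$. This is $\lesssim (1+t)^{-(1+\alpha)}$ for $\alpha>-1$. In the $\alpha=0$ case the extra $|\log s|$ factor produces $\log(2+t)/(1+t)$.

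\textbf{Discretization error.} Apply the standard trapezoidal-rule estimate for functions analytic in a strip. The strip here is centred at $\operatorname{Im}z=-y_2$ with half-width $d=\min(y_2-\theta_1,\ \pi/4-\theta_0-y_2)>0$, and $y_2$ is taken at the midpoint. The estimate gives
\[
\Bigl|\int_{\mathbb R} g - h\sum_{n\in\mathbb Z} g\Bigr| \le \frac{2\,\sup_{|\eta|<d}\|g(\cdot-\mathrm i y_2+\mathrm i\eta;t)\|_{L^1}}{\mathrm e^{2\pi d/h}-1}.
\]
The $L^1$ norm on each line of the strip is bounded by the integral computed above, uniformly in $\eta$. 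This yields the term $\mathrm e^{-c/h}(1+t)^{-(1+\alpha)}$, or the logarithmic version when $\alpha=0$. Restricting to $y\ge\theta_1>0$ is essential here: it keeps $\operatorname{Im}\tanh z$ bounded away from zero, so the $t$-decay holds on every line of the strip.

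\textbf{Truncation error.} Bound $h\sum_{|nh|>M}|g(nh-\mathrm i y_2;t)|$ by a Riemann-sum comparison with $\int_{|x|>M-h}$ of the pointwise bound. Monotonicity for large $|x|$ makes this comparison valid, and $h<1$ means the shift from $M$ to $M-h$ costs only a constant factor. Two estimates follow.
\begin{itemize}
\item Dropping the exponential factor gives $\lesssim \mathrm e^{-2(1+\alpha)M}\le \mathrm e^{-(1+\alpha)M}$.
\item Keeping it gives $\int_0^{\mathrm e^{-2M}} s^{\alpha}\mathrm e^{-c' t s}\,\mathrm ds \le (1+t)^{-(1+\alpha)}$.
\end{itemize}
Taking the minimum gives $j_{M,\alpha}$. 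For $\alpha=0$ the extra $(1+|x|)$ factor gives $M\mathrm e^{-2M}\lesssim \mathrm e^{-M}$.

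\textbf{Main obstacle.} The hardest step is proving the uniform pointwise bound on $g$ across the whole strip with constants independent of $t$. Two issues need care.
\begin{itemize}
\item The lower bound $-\operatorname{Im}\tanh z\gtrsim \mathrm e^{-2|x|}$ must hold uniformly in $y\in[\theta_1,\pi/4-\theta_0]$. This requires $\cos 2y$ to stay bounded away from $-1$ and $\sin 2y$ to stay bounded away from $0$; both hold on this interval.
\item We must check that $\tanh z\in\Omega_\pm$ according to the sign of $x$, so that \cref{eq:J_bound} applies in the correct form on each half.
\end{itemize}
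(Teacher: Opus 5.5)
Your proof follows the paper's almost step for step. It uses the same split into an infinite-grid trapezoidal error plus a tail, and the same pointwise bound on $g$: your $\mathrm e^{-2(1+\alpha)|x|}\exp(-c't\,\mathrm e^{-2|x|})$ is the paper's $p_\alpha(x;t)$ up to constants, since $\cosh^{-2}x\asymp\mathrm e^{-2|x|}$. The substitution reducing the line integrals to $\int_0^1 s^{\alpha}\mathrm e^{-c'ts}\,\mathrm{d}s$ is the same, as is the strip estimate with $y_2$ at the midpoint. So are the geometric-series bounds giving $\mathrm e^{-(1+\alpha)M}$ and, for $\alpha=0$, $\mathrm e^{-M}$.

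The only divergence is the $(1+t)^{-(1+\alpha)}$ half of $j_{M,\alpha}$. The paper bounds the full sum $h\sum_{n\in\mathbb Z}p_\alpha(nh;t)$ by applying the trapezoidal-rule theorem a second time. It applies it to the analytic continuation of $p_\alpha(\cdot\,;t)$ on $|\operatorname{Im}z|<\pi/8$, where $|p_\alpha(z;t)|\le 2^{1+\alpha}p_\alpha(x;t/2)$. The sum is then the integral plus an error that still decays like $(1+t)^{-(1+\alpha)}$.

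You instead compare the tail sum with $\int_{|x|>M-h}$ and invoke monotonicity for large $|x|$. That justification is not uniform in $t$. Set $u=|x|$ and $q(u)=\mathrm e^{-2(1+\alpha)u}\exp(-c't\,\mathrm e^{-2u})$. Then $(\log q)'(u)=-2(1+\alpha)+2c't\,\mathrm e^{-2u}$, so $q$ increases up to $u^*=\tfrac12\log\bigl(c't/(1+\alpha)\bigr)$, which tends to infinity with $t$. Hence for fixed $M$ and large $t$, the cellwise comparison $h\,q(nh)\le\int_{(n-1)h}^{nh}q$ fails on $[M,u^*]$.

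The repair is short. Since $q$ is unimodal, $h\sum_{nh>M}q(nh)\le\int_{M-h}^{\infty}q(u)\,\mathrm{d}u+2h\sup_u q(u)$, and likewise for $nh<-M$. Writing $q=s^{1+\alpha}\mathrm e^{-c'ts}$ with $s=\mathrm e^{-2u}\in(0,1]$ gives $\sup_u q\le\min\{1,((1+\alpha)/(\mathrm e\,c't))^{1+\alpha}\}\lesssim(1+t)^{-(1+\alpha)}$. So your claimed bound holds.

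With this fix, your route is more elementary: there is no second contour argument and no branch of $(\cosh^{-2}z)^{\alpha+1}$ to choose. The paper's route, in exchange, avoids any analysis of the shape of $p_\alpha$ in $x$.
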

As shown in \cref{thm:pointwise_error}, the approximation of $\Delta(t) = \int_{-\infty}^{\infty} g(x - \mathrm i y; t) \ud x$ is given by a uniform discretization in $x$ with step size $h$ and truncation $|x|<M$, for a fixed $y$. We leave the proof of \cref{thm:pointwise_error} to the end of this section. Note that the tail bound for $\alpha=0$ can be improved to $j_{M,0}(t) = \min (\mathrm e^{- M} , \frac{\log^2(2+t)}{1+t})$, but using $j_{M,0}(t) =  \mathrm e^{-M}$ is sufficient for this work. The pointwise error estimate in \cref{thm:pointwise_error} directly leads to the main result of this work, which is the $L^1$ error estimate for approximating $\Delta(t)$:
\begin{corollary}[Main result: $L^1$ error estimate]
  With the same assumptions as in \cref{thm:pointwise_error}, to achieve $L^1$ error less than $\varepsilon$ for approximating $\Delta(t)$ by a sum of exponentials, the number of terms required satisfies the following scaling:
  $$
  \begin{array}{ll}
    N = O\left(\log^2 (C_J /  \varepsilon)\right), & \text{for } \alpha> 0,\\
    N = O\left(\log\left(C_J \frac{\log( T)}{ \varepsilon}\right) \log\left(\frac{C_J T}{\varepsilon}\right)\right), & \text{for } \alpha=0,\\
    N = O\left(\log^2(C_J T/\varepsilon)\right), & \text{for } -1<\alpha<0.
  \end{array}
  $$
  \label{cor:main}

\end{corollary}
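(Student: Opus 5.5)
The plan is to integrate the pointwise bound of \cref{thm:pointwise_error} over $[0,T]$ and then choose $h$ and $M$ so that each of the two resulting pieces is at most $\varepsilon/2$. The quadrature in \cref{thm:pointwise_error} has $N=\#\{n:|nh|\le M\}\le 2M/h+1$ terms. Each term $h\,g(nh-\mathrm i y_2;t)$ is a single complex exponential $c_n \mathrm e^{-\mathrm i z_n t}$ with $z_n=\tanh(nh-\mathrm i y_2)$ and $c_n=h J_{\text{eff}}(z_n)/\cosh^2(nh-\mathrm i y_2)$, so it fits the form \cref{eq:SOE}. Hence $N=O(M/h)$, and the task reduces to finding the smallest admissible $M$ and $1/h$ in each regime. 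The nondimensionalization of \cref{sec:setup} only rescales $\varepsilon,T$ by the half bandwidth, which is absorbed into the constants.

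\emph{Case $\alpha>0$.} Integrating the first term gives $C_JA_\alpha \mathrm e^{-c/h}\int_0^\infty(1+t)^{-(1+\alpha)}\ud t=C_JA_\alpha\mathrm e^{-c/h}/\alpha$, which is independent of $T$. It suffices to take $1/h=c^{-1}\log(2C_JA_\alpha/(\alpha\varepsilon))$, so $1/h=O(\log(C_J/\varepsilon))$.

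For the tail, the key point is the minimum in $j_{M,\alpha}$. Split the integral at $t_*=\mathrm e^{M}-1$. On $[0,t_*]$ use $\mathrm e^{-(1+\alpha)M}$, giving at most $t_*\mathrm e^{-(1+\alpha)M}\le \mathrm e^{-\alpha M}$. On $[t_*,\infty)$ use $(1+t)^{-(1+\alpha)}$, giving $\mathrm e^{-\alpha M}/\alpha$. Both contributions are $T$-independent, so $M=O(\alpha^{-1}\log(C_J/\varepsilon))$ suffices. This yields $N=O(\log^2(C_J/\varepsilon))$.

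\emph{Case $-1<\alpha<0$.} Now $(1+t)^{-(1+\alpha)}$ is not integrable, and $\int_0^T(1+t)^{-(1+\alpha)}\ud t\le (1+T)^{-\alpha}/|\alpha|$. The discretization piece is therefore bounded by $C_JA_\alpha \mathrm e^{-c/h}(1+T)^{|\alpha|}/|\alpha|$, which forces $1/h=O(\log(C_JT/\varepsilon))$. For the tail, bound $j_{M,\alpha}\le \mathrm e^{-(1+\alpha)M}$ and integrate over $[0,T]$ to get $T\mathrm e^{-(1+\alpha)M}$. Since $1+\alpha>0$, this requires $M=O((1+\alpha)^{-1}\log(C_JT/\varepsilon))$. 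The product gives $N=O(\log^2(C_JT/\varepsilon))$.

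\emph{Case $\alpha=0$.} The discretization piece integrates to $C_JA_0\mathrm e^{-c/h}\int_0^T\frac{\log(2+t)}{1+t}\ud t\le C_JA_0\mathrm e^{-c/h}\log^2(2+T)$. This needs $1/h=O(\log(C_J\log^2 T/\varepsilon))$, which is the same as $O(\log(C_J\log T/\varepsilon))$. The tail contributes $C_JA_0T\mathrm e^{-M}$, so $M=O(\log(C_JT/\varepsilon))$. Together these give the stated mixed bound $N=O(\log(C_J\log T/\varepsilon)\log(C_JT/\varepsilon))$.

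The only delicate step is the $\alpha>0$ tail. Using $\mathrm e^{-(1+\alpha)M}$ alone over $[0,T]$ would reintroduce $T$, so the argument depends on the $(1+t)^{-(1+\alpha)}$ branch of $j_{M,\alpha}$ and on the split at $t_*\approx \mathrm e^M$ to make the bound time-uniform. The other two cases are direct integration followed by solving for $h$ and $M$. Note that the constants depend on $\alpha$, through factors like $1/\alpha$ and $1/(1+\alpha)$, and on $\theta_0,\theta_1$; all of these are hidden in the $O(\cdot)$.
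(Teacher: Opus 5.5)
Your proposal is correct and follows the paper's proof essentially step for step: integrate the pointwise bound of \cref{thm:pointwise_error} over $[0,T]$; for $\alpha>0$, split the tail near $t\approx \mathrm e^{M}$ to get the $T$-independent bound $\frac{\alpha+1}{\alpha}\mathrm e^{-\alpha M}$; in the other two cases use $\int_0^T(1+t)^{-(1+\alpha)}\ud t\le (1+T)^{|\alpha|}/|\alpha|$ and $\int_0^T\frac{\log(2+t)}{1+t}\ud t\le\log^2(2+T)$; then choose $M$ and $1/h$ logarithmically so that $N=\lceil 2M/h\rceil$ has the stated scaling. Your split at $\mathrm e^M-1$ rather than $\mathrm e^M$, and your explicit identification of the nodes $z_n=\tanh(nh-\mathrm i y_2)$ and weights, are only cosmetic differences.
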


  In unrescaled quantities, $T$ and $\varepsilon$ should be replaced by $WT$ and $W\varepsilon$ respectively. Thus the scaling becomes (omitting the $C_J$ factor):
\begin{equation}
    \begin{array}{ll}
    N = O\left(\log^2 (1/ W  \varepsilon)\right), & \text{for } \alpha> 0,\\
    N = O\left(\log\left(\frac{\log(W T)}{ W\varepsilon}\right) \log\left(\frac{T}{\varepsilon}\right)\right), & \text{for } \alpha=0,\\
    N = O\left(\log^2(  T/\varepsilon)\right), & \text{for } -1<\alpha<0.
    \end{array}
    \label{eq:main}
\end{equation}

Corollary \ref{cor:main} leads to the main conclusion of this work in \cref{tab:summary}, for which we have rescaled the quantities back to the original scale, and explicitly considered $C_J$'s dependence on $\beta$, the latter of which is addressed in \cref{sec:looseend}. We have also replaced $W$ with the characteristic frequency $\omega_c$, which is straightforward for spectral densities with finite support, and for spectral densities with infinite support but exponential decay, this is also addressed in \cref{sec:looseend}.

The proof of \cref{cor:main} from \cref{thm:pointwise_error} is simply to integrate the pointwise error estimate over $t\in[0,T]$, and choose $M$ and $h$ to bound both terms in the pointwise error estimate by $\varepsilon$. The details are given below.
\begin{proof}[Proof of Corollary \ref{cor:main} from Theorem \ref{thm:pointwise_error}]
 For $\alpha>0$, by decomposing the integral into $[0, \mathrm e^M]$ and $[\mathrm e^M, T]$, and using the two bounds $\mathrm e^{-(\alpha+1)M}$ and $(1+t)^{-(1+\alpha)}$ correspondingly for the two intervals, we have that $ \int_0^T |j_{M,\alpha}(t)| \ud t\leq \frac{\alpha+1}{\alpha} \mathrm e^{- \alpha  M }$.
    Thus using Theorem \ref{thm:pointwise_error}, the $L^1$ error is bounded by $C_J\left(\frac{A_\alpha}{\alpha} \mathrm e^{-c / h} + \frac{A_\alpha(\alpha+1)}{\alpha} \mathrm e^{-\alpha M}\right)$.  By choosing $M =  \frac{1}{\alpha} \log\left(\frac{2(\alpha+1)C_JA_\alpha}{\alpha\varepsilon}\right)  $, $1/h = \frac{1}{c} \log\left(\frac{2C_{ J}A_\alpha}{\alpha  \varepsilon}\right)$, the $L^1$ error is less than $\varepsilon$, and the number of terms $N =\lceil 2M/h\rceil = O(\log^2(C_J/\varepsilon))$. 
    
    For $\alpha<0$, since $\int_0^T (1+t)^{-(1+\alpha)} \ud t =  ((1+T)^{|\alpha|} - 1)/|\alpha|$, we have that the $L^1$ error is bounded by
$$C_{J}\left(\frac{A_\alpha}{|\alpha|} \mathrm e^{-c/h}(1+ T)^{|\alpha|} +A_\alpha  T \mathrm e^{-(\alpha+1)M}\right).$$ To make this bounded by $\varepsilon$, we can choose $ M = \frac{1}{\alpha+1} \log\left(\frac{2C_{  J}A_\alpha T}{ \varepsilon}\right)$, $ 1/h = \frac{1}{c} \log\left(\frac{2C_{ J}A_\alpha(1+ T)^{|\alpha|}}{|\alpha| \varepsilon}\right)$. Thus $N = \lceil 2M/h\rceil = O(\log^2(C_J T/\varepsilon))$.  

Finally, for $\alpha=0$, since $\int_0^T \frac{\log(2+t)}{1+t} \leq 2\int_0^T \frac{\log(2+t)}{2+t} \ud t \leq  (\log(2+T))^2$, the $L^1$ error is bounded by $$C_J\left(\frac{A_0}{2} \mathrm e^{-c/h} (\log(2+T))^2 + A_0 T \mathrm e^{-M}\right).$$ Thus, it suffices to choose $M = \log\left(\frac{2C_J A_0 T}{\varepsilon}\right)$, $1/h = \frac{1}{c} \log\left(\frac{2C_J A_0 (\log(2+T))^2}{\varepsilon}\right)$. We conclude that $$N = \lceil 2M/h\rceil = O\left(\log\left(C_J \frac{\log( T)}{ \varepsilon}\right) \log\left(\frac{C_J T}{\varepsilon}\right)\right).$$
\end{proof}

Another consequence of \cref{thm:pointwise_error} is the following $L^\infty$ error estimate, which is $T$-independent for all $\alpha>-1$:
\begin{corollary}[$L^\infty$ error estimate]
    With the same assumptions as in \cref{thm:pointwise_error}, to achieve $L^\infty$ error less than $\varepsilon_\infty$ for approximating $\Delta(t)$ by a sum of exponentials, the number of terms required satisfies the following scaling: $N = O\left(\log^2 (C_J/\varepsilon_\infty)\right)$ for all $\alpha>-1$.
    \label{cor:linfty}
\end{corollary}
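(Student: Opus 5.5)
We derive Corollary~\ref{cor:linfty} directly from the pointwise bound of Theorem~\ref{thm:pointwise_error}, by taking a supremum over $t\ge 0$ instead of integrating over $[0,T]$. As in the proof of Corollary~\ref{cor:main}, the sum of exponentials is $\sum_{|nh|\le M} h\, g(nh-\mathrm i y_2;t)$. Each term has the form $c_j \mathrm e^{-\mathrm i z_j t}$ with $z_j=\tanh(nh-\mathrm i y_2)$ and a $t$-independent weight $c_j$, so the number of terms is $N=\lceil 2M/h\rceil$. It therefore suffices to choose $h$ and $M$, independently of $T$, so that the right-hand side of Theorem~\ref{thm:pointwise_error} is at most $\varepsilon_\infty$ uniformly in $t\ge 0$.

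We bound both terms uniformly in $t$.

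\textbf{Case $\alpha\neq 0$.} Since $\alpha>-1$, the prefactor satisfies $(1+t)^{-(1+\alpha)}\le 1$ for all $t\ge 0$. The tail satisfies $j_{M,\alpha}(t)\le \mathrm e^{-(\alpha+1)M}$. Hence
\[
\sup_{t\ge 0}\Bigl|\Delta(t)-\sum_{|nh|\le M} h\,g(nh-\mathrm i y_2;t)\Bigr|\le C_J A_\alpha\bigl(\mathrm e^{-c/h}+\mathrm e^{-(\alpha+1)M}\bigr).
\]

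\textbf{Case $\alpha=0$.} The function $\log(2+t)/(1+t)$ is bounded on $[0,\infty)$ by an absolute constant $K$; for instance $K=\log 2$, attained at $t=0$, since the function is decreasing. The tail is $j_{M,0}(t)=\mathrm e^{-M}$. Hence the error is bounded by $C_J A_0\bigl(K\mathrm e^{-c/h}+\mathrm e^{-M}\bigr)$.

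In both cases we choose
\[
1/h=\tfrac1c\log\bigl(2C_JA_\alpha\max(1,K)/\varepsilon_\infty\bigr),\qquad M=\tfrac{1}{\alpha+1}\log\bigl(2C_JA_\alpha/\varepsilon_\infty\bigr),
\]
with $\alpha+1$ replaced by $1$ when $\alpha=0$. If needed, we enlarge these slightly so that $0<h<1$ and $M>1$, as Theorem~\ref{thm:pointwise_error} requires; this changes only constants. Then each of the two terms is at most $\varepsilon_\infty/2$, and
\[
N=\lceil 2M/h\rceil=O\bigl(\log^2(C_J/\varepsilon_\infty)\bigr).
\]
The implied constant depends on $\alpha$ through $1/(\alpha+1)$ and $A_\alpha$, but not on $T$.

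There is no real obstacle here, since all the analytic work is contained in Theorem~\ref{thm:pointwise_error}. The only point needing care is that the $L^1$ argument incurred $T$-dependence from integrating $(1+t)^{-(1+\alpha)}$ when $\alpha\le 0$, and from integrating the tail bound $\mathrm e^{-(\alpha+1)M}$ over $[0,T]$. In the sup norm neither integration occurs: the $t$-dependent prefactors are uniformly bounded because $\alpha>-1$, and the tail term is already $t$-independent. This is why the $L^\infty$ bound is uniform in $T$ for every singularity order $\alpha>-1$.
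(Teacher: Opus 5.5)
Your proposal is correct and takes essentially the same approach as the paper: bound the $t$-dependent prefactors uniformly, use the $t$-independent tail bounds $\mathrm e^{-(\alpha+1)M}$ and $\mathrm e^{-M}$, and choose $M$ and $1/h$ logarithmic in $C_J A_\alpha/\varepsilon_\infty$, which gives $N=\lceil 2M/h\rceil=O(\log^2(C_J/\varepsilon_\infty))$. You also make explicit two details the paper leaves implicit: that $\log(2+t)/(1+t)\le\log 2$, and that the choices can be adjusted so that $0<h<1$ and $M>1$.
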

\begin{proof}[Proof of Corollary \ref{cor:linfty} from Theorem \ref{thm:pointwise_error}]
By Theorem \ref{thm:pointwise_error}, the $L^\infty$ error is bounded by $C_J\cdot A_\alpha(\mathrm e^{-c/h} + \mathrm e^{-(\alpha+1)M})$ for $\alpha\neq 0$ and by $C_J\cdot A_0(\mathrm e^{-c/h} + \mathrm e^{-M})$ for $\alpha=0$. Thus by choosing $M = \frac{1}{\alpha+1} \log\left(\frac{2C_J A_\alpha}{\varepsilon_\infty}\right)$ and $1/h = \frac{1}{c} \log\left(\frac{2C_J A_\alpha}{\varepsilon_\infty}\right)$ when $\alpha\neq 0$, and $M = \log\left(\frac{2C_J A_0}{\varepsilon_\infty}\right)$ and $1/h = \frac{1}{c} \log\left(\frac{2C_J A_0}{\varepsilon_\infty}\right)$ when $\alpha=0$, the $L^\infty$ error is less than $\varepsilon_\infty$, and in both cases $N = \lceil 2M/h\rceil = O(\log^2(C_J/\varepsilon_\infty))$.
\end{proof}

Now let us give a proof of \cref{thm:pointwise_error}. The error estimate in \cref{thm:pointwise_error} has two parts: the discretization error on an infinite grid of size $h$ and the truncation error on $[-M,M]$. Both errors are controlled by the decay properties of $g(z;t)$ in the strip $\mathcal S$, as stated in the following lemma:
\begin{lemma}[Upper bound of $g(x-\mathrm iy;t)$ (\cref{eq:g_definition})]  Given a singularity order $\alpha>-1$ of $ J_{\text{eff}}(\omega )$, the following upper bound holds for any $t\geq 0$ and $y\in [\theta_1, \frac{\pi}{4} - \theta_0]$:
    $$
|g(x-\mathrm iy;t)| \le C_{  J} \left\{
\begin{array}{ll}
    C_\alpha p_\alpha(x;t) , & \alpha>-1, \alpha\neq 0, \\
    C_0 (1+|x|)p_0(x;t)  , & \alpha=0.
\end{array}
 \right.
$$
where $p_\alpha(x;t)$ (for $\alpha>-1$)   are defined as 
\begin{equation}
    p_\alpha(x;t) = \mathrm e^{-c_{\theta_1} \frac{t}{\cosh^2 x}} \cosh^{-2(\alpha+1)} x,
    \label{eq:h_defn}
\end{equation}
and $C_\alpha$ is a constant that only depends on $\alpha$,  $C_{  J}$  is the constant in \cref{eq:J_bound}, and $c_{\theta_1} = \sin(2\theta_1)/2>0$.
\label{lem:g}
\end{lemma}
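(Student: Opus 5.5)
The bound in \cref{lem:g} will follow by estimating separately the three factors of $g(z;t)=J_{\text{eff}}(\tanh z)\,\mathrm e^{-\mathrm i t\tanh z}\cosh^{-2}z$ at $z=x-\mathrm iy$, using explicit hyperbolic identities. Write $\cosh(2z)=\cosh 2x\cos 2y-\mathrm i\sinh 2x\sin 2y$. First, $|\cosh z|^2=\cosh^2x-\sin^2 y=\tfrac12(\cosh 2x+\cos 2y)$. Because $y\le \pi/4-\theta_0$, we have $\cos 2y\ge \sin 2\theta_0>0$, so $|\cosh z|^2$ is comparable to $\cosh^2 x$ with constants depending only on $\theta_0$. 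Hence $|\cosh^{-2}z|\lesssim \cosh^{-2}x$.

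Second, for the oscillatory factor, I will use
\[
\tanh(x-\mathrm iy)=\frac{\sinh 2x-\mathrm i\sin 2y}{\cosh 2x+\cos 2y}.
\]
This gives
\[
|\mathrm e^{-\mathrm i t\tanh z}|=\exp\!\Big(t\operatorname{Im}\tanh z\Big)=\exp\!\Big(-\frac{t\sin 2y}{\cosh 2x+\cos 2y}\Big).
\]
Since $\cosh 2x+\cos 2y\le 2\cosh^2 x$ and $\sin 2y\ge\sin 2\theta_1$ for $y\in[\theta_1,\pi/4-\theta_0]$, the factor is at most $\exp(-c_{\theta_1}t/\cosh^2x)$ with $c_{\theta_1}=\sin(2\theta_1)/2$. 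This produces exactly the exponential in $p_\alpha$.

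Third, I need to control $J_{\text{eff}}(\tanh z)$ near the endpoints via \cref{eq:J_bound}. Note that $\tanh z\in\Omega$ by \cref{assumption:analyticity}, and by symmetry it suffices to treat $x\ge 0$, where $\operatorname{Re}\tanh z\ge 0$ so that $\tanh z\in\Omega_+$. The key identity is
\[
1-\tanh z=\frac{\mathrm e^{-z}}{\cosh z}=\frac{2}{\mathrm e^{2z}+1},
\]
so $|1-\tanh z|=2/|\mathrm e^{2x-2\mathrm iy}+1|$. Since $\cos 2y>0$, the real part of $\mathrm e^{2z}$ is positive and $|\mathrm e^{2z}+1|\ge \max(1,\mathrm e^{2x})$. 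I also want the matching lower bound $|\mathrm e^{2z}+1|\le \mathrm e^{2x}+1$. Together these give $|1-\tanh z|\asymp \mathrm e^{-2x}\asymp\cosh^{-2}x$ for $x\ge0$, uniformly in $y$.

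With this in hand the cases close. For $\alpha\ne0$ we get $|J_{\text{eff}}|\le C_J|1-\tanh z|^{\alpha}\le C_J C\cosh^{-2\alpha}x$; the two-sided comparability is what makes this valid for negative $\alpha$ as well. Multiplying by $\cosh^{-2}x$ gives $\cosh^{-2(\alpha+1)}x$, and combined with the exponential factor this is $C_J C_\alpha p_\alpha$. For $\alpha=0$ the log bound gives $1+|\log|1-\tanh z||\le 1+2x+O(1)\lesssim 1+|x|$, which yields $C_0(1+|x|)p_0$.

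The main obstacle is this third step. One needs uniform two-sided comparability of $|1\mp\tanh z|$ with $\mathrm e^{-2|x|}$ across the whole strip, including near $x=0$ where both endpoints are far away. There $J_{\text{eff}}$ is simply bounded on the compact part, and $C_J$ already absorbs this. Keeping the constants depending only on $\theta_0$ and $\alpha$ requires the positivity of $\cos 2y$, guaranteed by $\theta_0>0$.
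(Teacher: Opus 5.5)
Your proposal is correct and follows essentially the same route as the paper. Like the paper, you bound $|\cosh^{-2}z|$ and $|\mathrm e^{-\mathrm i t\tanh z}|$ via explicit hyperbolic formulas, use the sign of $\operatorname{Re}\tanh z$ to place $\tanh z$ in $\Omega_\pm$, and use $1-\tanh z=\mathrm e^{-z}/\cosh z$ to get two-sided comparability of $|1\mp\tanh z|$ with $\cosh^{-2}x$. The paper gets this comparability from $2^{-1/2}\cosh x\le|\cosh z|\le\cosh x$, while you get it from $\max(1,\mathrm e^{2x})\le|\mathrm e^{2z}+1|\le \mathrm e^{2x}+1$, whose upper inequality is just the triangle inequality.

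One small tightening. Your estimates only use $\cos 2y\ge 0$, so all the comparability constants are absolute rather than $\theta_0$-dependent as you hedge. That absolute dependence is what the lemma's claim that $C_\alpha$ depends only on $\alpha$ requires.
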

\begin{proof}[Proof of \cref{lem:g}] Let $z = x - \mathrm iy$.
   Recall that $g(z;t) =   J_{\text{eff}} (\tanh z) \mathrm e^{-\mathrm i t \tanh z} \frac{1}{\cosh^2 z}$. Since $|y|<\frac{\pi}4$, $
    |\cosh z|^2 = \cosh^2 x \cos^2 y + \sinh^2 x \sin^2 y \geq \frac 1 2 \cosh^2 x
$. Therefore $|1/\cosh^2 z| \leq 2\cosh^{-2} x$. Next, we bound $\mathrm e^{-\mathrm i t \tanh z} $. Note that $|\mathrm e^{-\mathrm it\tanh z}| = \mathrm e^{t\operatorname{Im}(\tanh z)}$, and $\operatorname{Im}(\tanh (x-\mathrm iy)) = \frac{-\sin 2y}{2(\sinh^2 x+ \cos^2y)}$.  Using that $y\in[\theta_1, \frac{\pi}{4} - \theta_0]$, we have $-\sin 2y\leq -\sin 2\theta_1 < 0$, and $\sinh^2x +  \cos^2y\leq\sinh^2x+1 = \cosh^2x$. Thus $|\mathrm e^{-\mathrm it\tanh z}| \leq \mathrm e^{-c_{\theta_1} t/\cosh^2 x}$, where $c_{\theta_1} = \sin(2\theta_1)/2>0$.  To pass from the endpoint bounds in \cref{eq:J_bound} to a bound on $J_{\text{eff}}(\tanh z)$, note that
$$
\operatorname{Re}(\tanh (x-\mathrm iy)) = \frac{\sinh(2x)}{\cosh(2x)+\cos(2y)},
$$
so the sign of $\operatorname{Re}(\tanh z)$ is the sign of $x$. Hence $\omega=\tanh z$ lies in $\Omega_+$ for $x\ge 0$ and in $\Omega_-$ for $x\le 0$. Moreover,
$$
1-\tanh z = \frac{\mathrm e^{-z}}{\cosh z}, \qquad 1+\tanh z = \frac{\mathrm e^{z}}{\cosh z}.
$$
Since $2^{-1/2}\cosh x\le |\cosh z|\le \cosh x$ when $|y|<\pi/4$, for $x\ge0$ we obtain
$$
\frac{2}{1+\mathrm e^{2x}} \le |\omega-1| \le \frac{2\sqrt{2}}{1+\mathrm e^{2x}},
$$
and similarly for $x\le0$,
$$
\frac{2}{1+\mathrm e^{-2x}} \le |\omega+1| \le \frac{2\sqrt{2}}{1+\mathrm e^{-2x}}.
$$
These quantities are comparable to $\cosh^{-2}x$ up to absolute constants, and therefore \cref{eq:J_bound} yields the following bound for $J_{\text{eff}}(\tanh z)$: 
\begin{equation}
    |  J_{\text{eff}} (\tanh z)| \leq  {C_{  J}} \cdot \left\{
    \begin{array}{ll}
        C_\alpha \cosh^{-2\alpha} x, & \alpha\neq 0, \alpha>-1,\\
        C_0 (1 + |x|), & \alpha=0.
    \end{array}
    \right.
\end{equation}
Here $C_\alpha$ is a constant that only depends on $\alpha$. Combining the bounds for the three factors in $g(z;t)$, we have the desired bound for $|g(z;t)|$.
\end{proof}
Finally, we prove \cref{thm:pointwise_error} using the upper bound of $g(z;t)$ in \cref{lem:g}.
\begin{proof}[Proof of \cref{thm:pointwise_error}]
  Let us first show the following discretization error estimate for the infinite grid:
    $$
\left| \Delta(t) - \sum_{n=-\infty}^{\infty} h g(nh-\mathrm i y_2;t)\right| \leq C_{ J}\cdot \left\{
\begin{array}{ll}
    A_\alpha \mathrm e^{-c/h} (1+t)^{-(1+\alpha)}  , & \alpha\neq 0,\alpha>-1,\\
    A_0  \mathrm e^{-c/h} \frac{\log(2+t)}{1+t}  , & \alpha=0.
\end{array}
\right.
    $$
 Choose $$
    y_2 = \frac{\theta_1 + (\frac{\pi}{4}-\theta_0)}{2}, \qquad a = \frac{\frac{\pi}{4}-\theta_0-\theta_1}{2} >0.
    $$
    Then $g(\cdot-\mathrm i y_2;t)$ is analytic in the strip $\{z:|\operatorname{Im} z|<a\}$. Using \cite[Theorem 5.1]{TrefethenWeideman2014}, a uniform grid with step size $h$ has the following error estimate:
  $$
  \left| \Delta(t) - \sum_{n=-\infty}^{\infty} h g(nh-\mathrm i y_2;t)\right| \leq \frac{2M_g}{\mathrm e^{2\pi a/h} - 1},
  $$
 Here $M_g = \sup_{y\in [\theta_1, \frac{\pi}{4} - \theta_0]} \int_{-\infty}^{\infty} |g(x-\mathrm i y;t)| \ud x$.  With \cref{lem:g}, we only need to show the following integral bounds for $p_\alpha(x;t)$ (for $\alpha>-1$, $\alpha\neq 0$)  and $p_0(x;t)$:
  $$
  \int_{-\infty}^{\infty} p_\alpha(x;t) \ud x \leq a_\alpha  (1+t)^{-(1+\alpha)}, \quad \int_{-\infty}^{\infty} (1+|x|)p_0(x;t) \ud x \leq a_0 \frac{\log(2+t)}{1+t}.
  $$
  Here $a_\alpha$ is a constant that only depends on $\alpha$. Conducting change of variables $v = 1 - \tanh^2 x$, we have:
  $$
  \int_{-\infty}^{\infty} p_\alpha(x;t) \ud x = \int_0^1 \mathrm e^{-c_{\theta_1} t v} v^{\alpha} \frac{\mathrm dv}{\sqrt{1-v}}
  $$
    For $v\in[0,\frac 1 2]$, we have 
    $$\int_0^{\frac 1 2} \mathrm e^{-c_{\theta_1} t v} v^\alpha \frac{\ud v}{\sqrt{1-v}} \leq \sqrt{2}\mathrm e^{c_{\theta_1}}\int_0^{\frac 1 2} \mathrm e^{-c_{\theta_1} (1+t) v} v^\alpha \ud v\leq \sqrt{2}\mathrm e^{c_{\theta_1}}\int_0^{+\infty} \mathrm e^{-c_{\theta_1} (1+t) v} v^\alpha \ud v = \frac{\sqrt{2}\mathrm e^{c_{\theta_1}}}{c_{\theta_1}^{\alpha+1}}\frac{\Gamma(\alpha+1)}{(1+t)^{\alpha+1}}.$$ 
    For $v\in[\frac 1 2,1]$, $\mathrm e^{-c_{\theta_1} t v} \leq \mathrm e^{-\frac{c_{\theta_1} t}{2}}$. Thus $$\int_{\frac 1 2}^1 \mathrm e^{-c_{\theta_1} t v} v^\alpha \frac{\ud v}{\sqrt{1-v}} \leq \mathrm e^{-\frac{c_{\theta_1} t}{2}} \int_{\frac 1 2}^1 v^\alpha \frac{\ud v}{\sqrt{1-v}} \leq B_\alpha \mathrm e^{-\frac{c_{\theta_1} t}{2}},$$ where $B_\alpha$ could be taken as $\int_{0}^1 v^\alpha \frac{\ud v}{\sqrt{1-v}} = \frac{\sqrt{\pi}\Gamma(\alpha+1)}{\Gamma(\alpha+\frac 3 2)}$. Combining the two parts, we have that  $\int_{-\infty}^{\infty} p_\alpha(x;t) \ud x$ is bounded by $(1+t)^{-(1+\alpha)}$ up to a constant for all $\alpha>-1$.

 For $\alpha=0$, by symmetry and the same change of variables $v=1-\tanh^2 x$,
    $$
    \int_{-\infty}^{\infty} (1+|x|)p_0(x;t) \,\ud x = \int_0^1 \left(1+\operatorname{arctanh}(\sqrt{1-v})\right) \mathrm e^{-c_{\theta_1}tv}\frac{\ud v}{\sqrt{1-v}}.
    $$
    Since $\operatorname{arctanh}(\sqrt{1-v}) \leq C(1+|\log v|)$ for $v\in(0,1)$, splitting again into $[0,\frac12]$ and $[\frac12,1]$ gives
    $$
    \int_{-\infty}^{\infty} (1+|x|)p_0(x;t) \,\ud x \leq C\int_0^{1/2} \mathrm e^{-c_{\theta_1}tv}(1+|\log v|)\,\ud v + C\mathrm e^{-c_{\theta_1}t/2}.
    $$
    After the rescaling $s=(1+t)v$, the first term is bounded by $C\frac{\log(2+t)}{1+t}$, and the second term is dominated by the same quantity. Hence
    $$
    \int_{-\infty}^{\infty} (1+|x|)p_0(x;t) \,\ud x \leq a_0\frac{\log(2+t)}{1+t}.
    $$

    Next, we only need to bound the truncation error from $|x|>M$:
    $
    \sum_{|nh|>M} h |g(nh-\mathrm i y_2;t)| \leq C_{ J}\cdot A_\alpha j_{M,\alpha}(t)
    $,
    where $j_{M,\alpha}(t) = \min\{ \mathrm e^{-(\alpha+1)M}, (1+t)^{-(1+\alpha)}\}$. (For $\alpha=0$, $j_{M,0}(t) = \mathrm e^{-M}$.) This amounts to the following tail bound for $p_\alpha(x;t)$:
    $$
\sum_{|nh|>M} h p_\alpha(nh;t) \leq A_\alpha j_{M,\alpha}(t),\quad \sum_{|nh|>M} h (1+|nh|)p_0(nh;t) \leq A_0 j_{M,0}(t).
    $$

    Let us first prove the $\alpha\neq 0$ case.
   The $\mathrm e^{-(\alpha+1)M}$ bound is straightforward. Since $\cosh s\geq \frac12\mathrm e^{|s|}$, for $\alpha\neq 0$ we have
    $$
    p_\alpha(nh;t)\leq \cosh^{-2(\alpha+1)}(nh) \leq 2^{2(\alpha+1)}\mathrm e^{-2(\alpha+1)|nh|} \leq 2^{2(\alpha+1)}\mathrm e^{-(\alpha+1)|nh|}.
    $$ 
    Since for any $\alpha>-1$, the following holds:
 
    $$
    h\sum_{|nh|>M} \mathrm e^{-(\alpha+1)|nh|} \leq 2h\sum_{n> M/h} \mathrm e^{-(\alpha+1)nh} \leq \frac{2h}{1-\mathrm e^{- (\alpha+1) h}}\mathrm e^{- (\alpha+1) M}.
    $$
    Since $0<h<1$, convexity gives $1-\mathrm e^{- (\alpha+1) h}\geq h(1-\mathrm e^{- (\alpha+1)})$, so the prefactor is bounded by a constant depending only on $\alpha$. This gives the desired $\mathrm e^{-(\alpha+1)M}$ bound for $\alpha\neq 0$.

    Next, we prove the $(1+t)^{-(1+\alpha)}$ bound for  $\alpha\neq 0$. Note that we have shown that $\int_{-\infty}^\infty p_\alpha(x;t)\ud x \leq A_\alpha (1+t)^{-(1+\alpha)}$.  In other words, it suffices to show that the discrete sum is bounded by the same quantity up to a constant. To this end, choose the analytic branch
$$
p_\alpha(z;t)=\exp\!\left(-c_{\theta_1}\frac{t}{\cosh^2 z}\right)\left(\frac{1}{\cosh^2 z}\right)^{\alpha+1}
$$
on the strip $\{z:|\operatorname{Im} z|<\frac{\pi}{8}\}$. This is well defined because $\cosh z$ has no zeros there, so $\cosh^{-2} z$ is analytic and nonvanishing on a simply connected domain. We then estimate
$$
\left|\int_{-\infty}^\infty p_\alpha(x;t)\ud x- \sum_{k=-\infty}^{\infty} h p_\alpha(kh;t) \right|.
$$
 Note that for such $z$, we have $\frac 1 2 \cosh^2x\leq |\cosh^2z| \leq 2 \cosh^2 x $, thus $p_\alpha(z;t)$ is bounded by 
$|p_\alpha(z;t)| \leq 2^{(1+\alpha)} \mathrm e^{-c_{\theta_1} \frac{t}{2\cosh^2 x}} \cdot \cosh^{-2(\alpha+1)} x $.
($x = \operatorname{Re}(z)$).
Thus $|p_\alpha(z;t)|$ is bounded by $2^{1+\alpha}p_\alpha(x;t/2)$. Since we have proved in the previous lemma that $\int_{-\infty}^\infty |p_\alpha(x;t)| \ud x  $ is bounded by $(1+t)^{-(\alpha+1)}$ up to a constant, we have $\int_{-\infty}^\infty |p_\alpha(x-\mathrm iy;t)| \ud x \leq a_\alpha' (1+t/2)^{-(\alpha+1)}\leq a_\alpha'' (1+t )^{-(\alpha+1)}$ for all $y\in[-\frac{\pi}{8}, \frac{\pi}{8}]$ for some constant $a_\alpha', a_\alpha''$ independent of $t$.  Thus using again \cite[Theorem 5.1]{TrefethenWeideman2014}, for $h<1$,
we have the following bound:
$
\left|\int_{-\infty}^\infty p_\alpha(x;t)\ud x- \sum_{k=-\infty}^{\infty} h p_\alpha(kh;t) \right| \leq \widetilde a_\alpha (1+t )^{-(\alpha+1)}
$,
 where $\widetilde a_\alpha$ is a constant independent of $t$ and $h$; here we only use that $h<1$, so $(\mathrm e^{2\pi a/h}-1)^{-1}$ is bounded by an absolute constant. Therefore
$$
h\sum_{n=-\infty}^\infty p_\alpha(nh;t) \leq \int_{-\infty}^\infty p_\alpha(x;t)\ud x + \widetilde a_\alpha (1+t )^{-(\alpha+1)} \leq \widetilde A_\alpha (1+t)^{-(\alpha+1)},
$$
which in turn controls the tail sum $h\sum_{|nh|\geq M} p_\alpha(nh;t)$.

    Finally, for $\alpha=0$,
    $$
    (1+|nh|)p_0(nh;t)\leq (1+|nh|)\cosh^{-2}(nh) \leq 4(1+|nh|)\mathrm e^{-2|nh|} \leq 4\mathrm e^{-|nh|}.
    $$
    Then, similar to the $\alpha\neq 0$ case, we have
    $
    h\sum_{|nh|>M} \mathrm e^{- |nh|}  \leq \frac{2h}{1-\mathrm e^{-  h}}\mathrm e^{- M}\leq \frac{2}{1-\mathrm e^{-  1}}\mathrm e^{-M}
    $. This gives the desired $\mathrm e^{-M}$ bound for $\alpha=0$.

\end{proof}

\section{Temperature (in)dependence and finite frequency truncation}
\label{sec:looseend}
\subsection*{Fermionic temperature independence}
Let us first address temperature independence for fermionic baths. This addresses the boundedness of the Fermi-Dirac function in our analytic domain. Since we have already separated the frequency domain into $\omega>\mu$ and $\omega<\mu$, the Fermi-Dirac function can appear only at the left or right endpoint of the support of $J_{\text{eff}}(\omega)$, namely:
$$
J_{\text{eff}}(\omega) = J(\omega) f_{FD}(\omega\pm 1), \quad \text{or } J_{\text{eff}}(\omega) = J(\omega) (1 - f_{FD}(\omega \pm 1)).
$$

The temperature-independence result thus stems from the following proposition for the Fermi-Dirac distribution, which states its boundedness in the complex domain $\Omega$ regardless of $\beta$.
\begin{proposition}
   The Fermi-Dirac function $f_{\beta}(\omega) = \frac{1}{\mathrm e^{\beta (\omega+1)}+1}$ is upper bounded by a constant independent of $\beta$ for all $\omega\in \Omega$, where $\Omega$ is the analytic domain defined in Assumption \ref{assumption:analyticity}.
\end{proposition}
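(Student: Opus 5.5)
The plan is to bound $|e^{\beta(\omega+1)}+1|$ from below by a positive constant independent of $\beta$, uniformly for $\omega\in\Omega$. Write $\omega+1 = u + \mathrm i v$ with $\omega\in\Omega$. Since $\Omega$ lies in the closed lower half-plane below $[-1,1]$, we have $v\le 0$. The key geometric claim is that $\Omega$ stays inside a sector at the point $\omega=-1$: there is an angle $\phi_0<\pi/2$, depending only on $\theta_0$, such that $|\arg(\omega+1)|\le \phi_0$ for all $\omega\in\Omega$. In other words, $|v|\le \tan(\phi_0)\,u$ and $u\ge 0$. Once this is established, the rest is elementary.

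\emph{Step 1 (sector property).} Write $\omega=\tanh z$ with $z=x-\mathrm i y$, $y\in[0,\frac{\pi}{4}-\theta_0]$. Use the identity $1+\tanh z = \frac{\mathrm e^{z}}{\cosh z} = \frac{2}{1+\mathrm e^{-2z}}$. Then
\[
\arg(1+\tanh z) = -\arg(1+\mathrm e^{-2x}\mathrm e^{2\mathrm i y}).
\]
The point $1+r\mathrm e^{2\mathrm i y}$ with $r\ge 0$ has argument in $[0, y]$: since $2y<\pi/2$, the point $r\mathrm e^{2\mathrm i y}$ lies in the open first quadrant, and adding $1$ only shrinks the angle, giving an argument in $[0,2y]$. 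The sharper bound $[0,y]$ follows from $1+r\mathrm e^{\mathrm i\varphi}$ having argument at most $\varphi/2$ when $r\le 1$ and at most $\varphi$ otherwise. Either way, $|\arg(\omega+1)|\le 2y \le \frac{\pi}{2}-2\theta_0=:\phi_0<\frac{\pi}{2}$. This is where Assumption \ref{assumption:analyticity}, specifically the margin $\theta_0>0$, enters.

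\emph{Step 2 (lower bound on the denominator).} For $\omega+1=\rho\,\mathrm e^{\mathrm i\psi}$ with $|\psi|\le\phi_0$, set $w=\beta(\omega+1)$, so that $\operatorname{Re} w=\beta\rho\cos\psi\ge 0$ and $|\operatorname{Im} w|\le \tan\phi_0\,\operatorname{Re} w$. We must show $|\mathrm e^{w}+1|\ge c>0$ uniformly over this closed sector. The zeros of $\mathrm e^{w}+1$ are at $w=\mathrm i\pi(2k+1)$, which lie on the imaginary axis and are excluded from the sector except at distance issues near the origin; the origin itself gives value $2$. Split into two cases:
\begin{itemize}
\item If $\operatorname{Re} w\ge \log 2$, then $|\mathrm e^w+1|\ge \mathrm e^{\operatorname{Re} w}-1\ge 1$.
\item If $\operatorname{Re} w<\log 2$, then $|\operatorname{Im} w|<\tan\phi_0\log 2$, so $w$ ranges over a compact set $K$. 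Since $\operatorname{Re} w\ge 0$, $\mathrm e^{w}+1=0$ would force $\operatorname{Re} w=0$, hence $\operatorname{Im} w=0$ by the sector condition, and then the value is $2\neq 0$. So $\mathrm e^w+1$ has no zeros on $K$, and by continuity and compactness $\min_K|\mathrm e^w+1|=:c_1>0$.
\end{itemize}

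The constant $c_1$ depends only on $\theta_0$, not on $\beta$. Therefore $|f_\beta(\omega)|\le \max(1,1/c_1)$ for all $\omega\in\Omega$ and all $\beta>0$. The same argument, applied after reflecting through $\omega\mapsto-\omega$ or using $1-f=f(-\cdot)$, handles the companion forms $f_{FD}(\omega-1)$ and $1-f_{FD}(\omega\pm1)$.

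The main obstacle is Step 1: making the sector claim precise, in particular confirming that the image of the strip under $\tanh$ near $\omega=-1$ approaches tangentially at an angle strictly less than $\pi/2$. The argument computation via $1+\mathrm e^{-2z}$ settles it cleanly. The one thing to double-check is the orientation, i.e.\ that $\operatorname{Im}\omega\le 0$ corresponds to $\psi\in[-\phi_0,0]$; only $|\psi|$ matters for the bound, so the sign does not affect the conclusion.
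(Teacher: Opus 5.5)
Your proof is correct and follows essentially the paper's route. You use the same sector bound $|\arg(\omega+1)|\le 2y\le \frac{\pi}{2}-2\theta_0$ from the $\tanh$ parametrization, then a $\beta$-independent lower bound on $|\mathrm e^{\beta(\omega+1)}+1|$, which you get by compactness in the rescaled variable $w=\beta(\omega+1)$ where the paper instead shows the poles $-1+\mathrm i(2n+1)\pi/\beta$ lie at distance at least $(\pi/\beta)\sin(2\theta_0)$ from the sector (giving an explicit constant). One small slip: your aside that $\arg(1+r\mathrm e^{2\mathrm i y})\in[0,y]$ is false for $r>1$ (i.e.\ $x<0$), where the argument approaches $2y$, but you only ever use the correct bound $2y$, so nothing breaks.
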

\begin{proof}
  The function $f_\beta(\omega)$ satisfies the following upper bound for some constant $C$ independent of $\beta$:
\begin{equation}
  |f_\beta(\omega)|\leq \left\{\begin{array}{cc}
     \frac{C}{\beta d},    &  d\leq \frac 1 {2\beta}\\
      C,   & d\geq \frac 1 {2\beta}
    \end{array}\right.
\end{equation}
Here $d$ is the distance between $\omega$ and the poles in $\{-1+ \mathrm i(2n+1)\pi/\beta, n\in\mathbb{Z}\}$ that are closest to $\omega$.
 Now write $\omega=\tanh(x-\mathrm i y)$ with $y\in[0,\frac{\pi}{4}-\theta_0]$. Using
$$
\omega+1=\frac{e^{2x}+\cos(2y)-\mathrm i\sin(2y)}{\cosh(2x)+\cos(2y)},
$$
we see that $\omega+1$ lies in the sector $\{re^{-\mathrm i\phi}: r\ge 0,\ 0\le \phi\le \frac{\pi}{2}-2\theta_0\}$, because
$$
|\arg(\omega+1)|=\arctan\!\left(\frac{\sin(2y)}{e^{2x}+\cos(2y)}\right)\le 2y\le \frac{\pi}{2}-2\theta_0.
$$
Hence the closest pole to $\omega$ is $-1-\mathrm i\pi/\beta$, and the distance from this pole to the above sector is bounded below by $(\pi/\beta)\sin(2\theta_0)$. Therefore
$$
d \geq \frac{\pi}{\beta}\sin(2\theta_0).
$$
This $O(1/\beta)$ separation between the poles and $\Omega$ is also illustrated in \cref{fig:contour}(a).
Thus $f_\beta(\omega)$ is upper bounded by $C\max\{1, \frac{1}{\pi\sin 2\theta_0}\}$ 
for $\omega\in \Omega$, which is independent of $\beta$.
\end{proof}
Proposition 1 thus implies that the temperature dependence of $C_J$ for fermionic baths is at most a constant factor, which does not affect the scaling of $N$ with respect to $\varepsilon$ and $T$.

\subsection*{Bosonic temperature dependence}
For bosonic problems, recall that $J_{\mathrm{eff}}(\omega)$ is given by
$$
J_{\mathrm{eff}}(\omega) = \mathrm{sign}(\omega)\frac{J(|\omega|)}{1-\mathrm e^{-\beta \omega}}.
$$
Note that for $\omega>0$, $\frac{1}{1-\mathrm e^{-\beta \omega}} = 1+ \frac{\mathrm e^{-\beta \omega}}{1-\mathrm e^{-\beta \omega}}\leq 1 + \frac{1}{\beta \omega}$. For $\omega<0$, $\frac{1}{\mathrm e^{\beta\cdot(- \omega)}-1} \leq \frac{1}{\beta\cdot(-\omega)}$. Thus for a given $\beta$, $J_{\mathrm{eff}}(\omega)$ is given by:
$$
J_{\mathrm{eff}}(\omega) =  \left\{
  \begin{aligned}
    J(\omega)  + J_\beta(\omega), \quad &\text{for } \omega>0,\\
    J_\beta(\omega), \quad &\text{for } \omega<0,
  \end{aligned}
\right.\quad \text{where } |J_\beta(\omega)| \leq \frac{J(|\omega|)}{\beta |\omega|}.
$$
Here we have used the notation $J_\beta(\omega)$ to denote the temperature-dependent part of $J_{\mathrm{eff}}(\omega)$. Since $|J_\beta(\omega)| \leq \frac{J(|\omega|)}{\beta |\omega|}$, we know that this temperature-dependent part $J_\beta(\omega)$ has a singularity order that is one order lower than the singularity order of $J(\omega)$ at $\omega=0$, and has a preconstant that is $O(1/\beta)$ times the preconstant of $J(\omega)$. Hence the temperature-dependent contribution carries an $O(1/\beta)$ prefactor and lowers the local singularity order by one at $\omega=0$. 

As shown in \cref{tab:summary}, for super-Ohmic behavior this gives only logarithmic dependence on $1/\beta$, whereas for Ohmic or sub-Ohmic behavior the reduced singularity order also changes the relevant $T$-dependence.

\subsection*{Finite frequency truncation of Ohmic-family spectral densities}
Here we address how to truncate the infinite support of exponentially decaying spectral densities, such as the Ohmic-family spectral densities. As mentioned in the main text, we introduce the following:
$$
J_{\mathrm{trunc}}(\omega) = J(\omega)\chi_W(\omega),
$$
where $J(\omega) = \omega^{\gamma}\omega_c^{-(\gamma+1)}\mathrm e^{-\omega/\omega_c}$ and $\chi_W(\omega)$ is a cutoff function so that $J_{\mathrm{trunc}}(\omega)$ is compactly supported on $[0,W]$. We choose $\chi_W(\omega)$ as follows:
$$
\chi_W(\omega) = \frac{ (1 - \mathrm e^{-(W-\omega)/\omega_c})}{  1-\mathrm e^{- W/\omega_c}}, \quad \text{for } \omega\in[0,W]; \quad \chi_W(\omega) = 0, \quad \text{for } \omega\in[W,+\infty).
$$
So the error of this truncation is given by:
$$
\Delta_{\text{error}}(t) = \int_0^{\infty} J(\omega) (1-\chi_W(\omega)) e^{-\mathrm i \omega t} \mathrm{d}\omega.
$$
The following proposition establishes that, for this truncation to achieve an $L^1$ error below $\varepsilon$, it suffices to choose $W = O(\log(1/\varepsilon))$. This contributes only $O(\log\log(1/\varepsilon))$ to the scaling of $N$ with respect to $\varepsilon$, since $N$ scales at most logarithmically with $W$ as shown in \cref{tab:summary}.
\begin{proposition}
 There exist constants $C_{\gamma, \omega_c}, W_0>0$ independent of $W$ and $t$ such that for any $W\geq W_0$ and $t\geq 0$, we have $|\Delta_{\text{error}}(t) |\leq C_{\gamma, \omega_c} \mathrm e^{-W/2\omega_c}\frac{1}{(1+t)^2}$. 
    \label{proposition:truncation}
\end{proposition}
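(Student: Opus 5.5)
The plan is to compute the truncation residual $F(\omega):=J(\omega)(1-\chi_W(\omega))$ explicitly and integrate by parts twice in $\Delta_{\text{error}}(t)=\int_0^\infty F(\omega)\mathrm e^{-\mathrm i\omega t}\ud\omega$. For $\omega\in[0,W]$ a direct computation gives
$$
1-\chi_W(\omega)=\frac{\mathrm e^{-(W-\omega)/\omega_c}-\mathrm e^{-W/\omega_c}}{1-\mathrm e^{-W/\omega_c}},
\qquad\text{so}\qquad
F(\omega)=\frac{\mathrm e^{-W/\omega_c}}{1-\mathrm e^{-W/\omega_c}}\,\omega_c^{-(\gamma+1)}\,\omega^{\gamma}\bigl(1-\mathrm e^{-\omega/\omega_c}\bigr).
$$
For $\omega\ge W$ we have $F=J$. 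This is where the specific form of $\chi_W$ matters: the factor $\mathrm e^{-W/\omega_c}$ appears uniformly on $[0,W]$, $F$ is continuous at $W$ (both pieces equal $J(W)$ there), and $F(\omega)\sim\omega^{\gamma+1}$ near $0$. Taking $W_0\ge\omega_c$, the prefactor $(1-\mathrm e^{-W/\omega_c})^{-1}$ is bounded by an absolute constant.

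\textbf{Small $t$.} I will use the trivial bound $|\Delta_{\text{error}}(t)|\le\|F\|_{L^1}$. On $[0,W]$,
$$
\int_0^W |F|\,\ud\omega\lesssim \mathrm e^{-W/\omega_c}\,W^{\gamma+1}.
$$
On $[W,\infty)$, $\int_W^\infty J\,\ud\omega\lesssim (1+W^\gamma)\mathrm e^{-W/\omega_c}$. Both are $\le C\mathrm e^{-W/2\omega_c}$ because any polynomial in $W$ is absorbed by $\mathrm e^{-W/2\omega_c}$, with the constant depending only on $\gamma,\omega_c$.

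\textbf{Large $t$.} I will integrate by parts twice. In the first integration the boundary terms vanish, since $F(0)=0$, $F(\infty)=0$, and $F$ is continuous at $W$, giving
$$
\Delta_{\text{error}}(t)=\frac{1}{\mathrm i t}\int_0^\infty F'(\omega)\mathrm e^{-\mathrm i\omega t}\ud\omega .
$$
For the second integration, $F'(0)=0$ because $F'\sim\omega^\gamma$ with $\gamma>0$, and $F'(\infty)=0$. The only surviving boundary term is the jump $[F']_W$ of $F'$ at $W$, which is bounded by $|F'(W^-)|+|J'(W)|\lesssim(1+W^\gamma)\mathrm e^{-W/\omega_c}$. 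This yields
$$
|\Delta_{\text{error}}(t)|\le t^{-2}\Bigl(|[F']_W|+\|F''\|_{L^1(0,W)}+\|J''\|_{L^1(W,\infty)}\Bigr).
$$

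\textbf{Main obstacle.} The step needing the most care is bounding $\|F''\|_{L^1(0,W)}$ uniformly in $W$ up to polynomial factors. Writing $F=\kappa_W\,\phi$ with $\kappa_W\lesssim\mathrm e^{-W/\omega_c}$ and $\phi(\omega)=\omega^\gamma(1-\mathrm e^{-\omega/\omega_c})$, the second derivative has two regimes:
\begin{itemize}
\item Near $0$, $\phi''\sim\omega^{\gamma-1}$, which is integrable precisely because $\gamma>0$. This is the only place where integrability at the origin could fail.
\item On $[1,W]$, $|\phi''|\lesssim\omega^\gamma+\omega^{\gamma-1}+\omega^{\gamma-2}$, which integrates to $O(1+W^{\gamma+1})$.
\end{itemize}
The tail satisfies $\int_W^\infty|J''|\lesssim(1+W^{\gamma})\mathrm e^{-W/\omega_c}$ for $W\ge W_0$. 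Altogether the bracket is $\lesssim\mathrm{poly}(W)\,\mathrm e^{-W/\omega_c}\le C\mathrm e^{-W/2\omega_c}$.

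\textbf{Conclusion.} Combining the two regimes through $\min\{1,t^{-2}\}\le 4(1+t)^{-2}$ gives $|\Delta_{\text{error}}(t)|\le C_{\gamma,\omega_c}\mathrm e^{-W/2\omega_c}(1+t)^{-2}$ for all $t\ge0$ and $W\ge W_0$, which is the claimed bound.
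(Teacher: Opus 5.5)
Your proposal is correct and follows the paper's route. Like the paper, you use the explicit form of $J(1-\chi_W)$ on $[0,W]$ and $[W,\infty)$, integrate by parts twice so that the only surviving boundary term is the jump of the derivative at $W$, absorb polynomial-in-$W$ factors times $\mathrm e^{-W/\omega_c}$ into $\mathrm e^{-W/2\omega_c}$, and combine via $\min\{1,t^{-2}\}\le 4(1+t)^{-2}$. Your small-$t$ step is slightly more careful than the paper's: you show explicitly that the $L^1$ norm of the residual is $O(\mathrm e^{-W/2\omega_c})$, whereas the paper only invokes its boundedness.
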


\begin{proof}[Proof of Proposition \ref{proposition:truncation}] Let $J_{\mathrm{error}}(\omega) = J(\omega) (1-\chi_W(\omega))$. Since $|J_{\mathrm{error}}(\omega)|\leq | J(\omega)|$ and $J(\omega)$ is in $L^1([0,\infty])$, it follows that $|J_{\mathrm{error}}(\omega)|$ is also in $L^1([0,\infty])$. Therefore $\Delta_{\mathrm{error}}(t)$ is bounded.

To establish the decay of $\Delta_{\mathrm{error}}(t)$, note that $J_{\mathrm{error}}(\omega)$ is continuous and twice differentiable on $[0,W]$ and on $[W,+\infty)$. Using integration by parts twice, we have that for any $t\geq 0$,
$$
\int_0^\infty J_{\mathrm{error}}(\omega) \mathrm e^{-\mathrm i\omega t} \ud \omega  = -\frac{1}{t^2}\int_0^\infty J_{\mathrm{error}}''(\omega) \mathrm e^{-\mathrm i\omega t} \ud \omega -\frac{1}{t^2}\left( J_{\mathrm{error}}'(W^+) - J_{\mathrm{error}}'(W^-)\right)\mathrm e^{-\mathrm i W t}.
$$
Since $J_{\mathrm{error}}(\omega)$ is given by: $$
J_{\mathrm{error}} (\omega) = \omega_c^{-(\gamma+1)} \left\{
\begin{array}{ll}
\frac{\mathrm e^{-W/\omega_c}}{1 - \mathrm e^{-W/\omega_c}} \omega^{\gamma}( 1- \mathrm e^{-\omega/\omega_c}) & \omega \in [0,W],\\
\omega^{\gamma} \mathrm e^{-\omega/\omega_c} & \omega \in [W,+\infty).
\end{array}
\right.
$$
Note that $J_{\mathrm{error}} (\omega)\sim O(\omega^{\gamma + 1}) $ as $\omega \to 0$, so $J_{\mathrm{error}}''(\omega)\sim O(\omega^{\gamma - 1})$ as $\omega \to 0$. For $\gamma>0$, this implies that $J_{\mathrm{error}}''(\omega)$ is integrable near $0$. Moreover, on $[0,W]$ one has
$$
\int_0^W |J_{\mathrm{error}}''(\omega)| \, \mathrm d\omega \leq C_{\gamma,\omega_c}\frac{\mathrm e^{-W/\omega_c}}{1 - \mathrm e^{-W/\omega_c}} (1+W^\gamma),
$$
while on $[W,+\infty)$,
$$
\int_W^{+\infty}|J_{\mathrm{error}}''(\omega)|\,\mathrm d\omega \leq C_{\gamma,\omega_c}' \mathrm e^{-W/\omega_c}(1+W^\gamma).
$$
Also,
$$
|J_{\mathrm{error}}'(W^+) - J_{\mathrm{error}}'(W^-) | = |J(W)| \frac{ 1}{(1-\mathrm e^{-  W/\omega_c})\omega_c}.
$$
Hence both the $L^1$ norm of $J_{\mathrm{error}}''$ and the jump term are bounded by $C_{\gamma,\omega_c}'' \frac{\mathrm e^{-W/\omega_c}}{1-\mathrm e^{-W/\omega_c}}(1+W^\gamma)$. This is not uniform in $W$ near $W=0$ when $0<\gamma<1$, but for sufficiently large $W$ it is bounded by $\widetilde C_{\gamma,\omega_c}\mathrm e^{-W/(2\omega_c)}$. Therefore, for $W\ge W_0$,
$$
|\Delta_{\mathrm{error}}(t)| \leq C_{\gamma, \omega_c} ' \mathrm e^{-W/2\omega_c}\frac{1}{t^2},
$$
and combining this with the boundedness of $\Delta_{\mathrm{error}}(t)$ yields
$$
|\Delta_{\mathrm{error}}(t)| \leq C_{\gamma, \omega_c} \mathrm e^{-W/2\omega_c}\frac{1}{(1+t)^2}, \qquad t\ge 0,
$$
for some constant $C_{\gamma, \omega_c}$ independent of $W$ and $t$. 
\end{proof}

The above proposition establishes that, asymptotically, it is sufficient to choose $W = O(\log(1/\varepsilon))$ for all $\gamma>0$ at zero temperature. For finite temperature, because the Bose--Einstein distribution reduces the singularity order by one, the same cutoff holds for all $\gamma>1$ at any temperature. For $\gamma\in(0,1]$ at finite temperature, $J_{\mathrm{eff}}(\omega)\sim O(\omega^{\gamma-1}/\beta)$, so we cannot integrate by parts twice, only once. Following the same argument, we can show that $C_{\mathrm{error},\beta}(t) \leq\frac{1}{\beta} C_{\gamma, \omega_c}' \mathrm e^{-W/2\omega_c}\frac{1}{1+t}$. This means that to achieve an $L^1$ error on $[0,T]$ below $\varepsilon$, it suffices to choose $W = O(\log(\frac{\log(1+T)}{\beta\varepsilon}))$.
The final complexity scaling thus has an additional $\log\log\log(T)$ dependence for $\gamma\in(0,1]$ at finite temperature. Since this term is negligible compared with the dominant $\log T$ and $\log\log T$ dependences, it is omitted in \cref{tab:summary}.

\section{Classical generalized Langevin equations with spectral singularities}
\label{sec:classical}

We present two examples of classical generalized Langevin equations (GLEs) with spectral singularities that are widely used in the literature. Our results apply to these GLEs and thus provide rigorous error bounds for their numerical simulation.

The first example is to consider the classical lattice dynamics subject to a harmonic bath \cite{adelman1974generalized}.  In the 1D semi-infinite chain, we take the \emph{bath} atoms \(j\le 0\) to be harmonic with nearest-neighbor coupling (with spring stiffness \(K^2\)) while the interior atoms \(j\ge 1\) can remain anharmonic. Since the bath dynamics is linear, one can solve the equations of motion for the bath and substitute the result back into the boundary atom's equation, thereby eliminating all bath degrees of freedom in favor of a retarded friction (memory) term plus a fluctuating force determined by the bath initial conditions. This yields the exact generalized Langevin equation for the boundary atom \(j=1\):
\[
m\ddot u_1(t)=\phi'(u_2-u_1)-\int_0^t C(t-s)\,\dot u_1(s) \ud s+\xi(t),
\]
where $\phi$ is the anharmonic potential, $\xi(t)$ is the random force with $\langle \xi(t)\xi(0)\rangle = k_B T\,C(t),$ and the correlation kernel
\[
C(t)=\frac{\sqrt{m}\,K}{t}\,J_1\!\left(\omega_D t\right),\qquad \omega_D=\frac{2K}{\sqrt{m}},
\]
and, if one even-extends \(C(t)\) to \(\mathbb{R}\) via \(C_{\rm even}(t)=C(|t|)\), the corresponding (two-sided) spectral distribution
\begin{align*}
S(\omega)\;:=\;\int_{-\infty}^{\infty}C(|t|)\,e^{-i\omega t}\,dt
&=2\int_{0}^{\infty}C(t)\cos(\omega t)\,dt =m\sqrt{\omega_D^2-\omega^2}\;\mathbf{1}_{|\omega|<\omega_D},
\end{align*}
so \(S(\omega)\) has square-root nonanalyticity at the band edges \(\omega=\pm\omega_D\) (and vanishes for \(|\omega|>\omega_D\)). This is similar to the case of van Hove singularities in 3D (see \cref{fig:VanHove}(b)), so our results also apply to this GLE.

In practice, by approximating the memory kernel as a sum of exponentials, $C(t) \approx \sum_{j=1}^N c_j e^{-z_j t}$, the memory integral decomposes into $N$ independent auxiliary modes defined by $\phi_j(t) = \int_0^t e^{-z_j(t-s)} \dot{u}_1(s) \, ds$. Crucially, by differentiating with respect to time, one finds that each mode satisfies a local ordinary differential equation, $\dot{\phi}_j(t) = -z_j \phi_j(t) + \dot{u}_1(t)$. This transformation unwraps the convolution, allowing the full memory term to be computed as a weighted sum $\sum_{j=1}^N c_j \phi_j(t)$ without storing the system's history.  This is a direct classical analogue of the pseudomode approach. Crucially, the efficiency of this embedding relies entirely on the compactness of the sum, namely the number of modes $N$ required to achieve a given accuracy, which is precisely the quantity we bound in this work.

Another widely used GLE is the fractional generalized Langevin equation (FLE) for viscoelastic fluids:
\[
m\ddot x(t)+\int_{0}^{t}C(t-\tau)\,\dot x(\tau)\,d\tau+U'(x(t))=\xi(t),
\]
Here
$\langle \xi(t)\xi(t')\rangle = k_B T\,C(|t-t'|),$ 
i.e., the random force satisfies the equilibrium fluctuation--dissipation relation with the same memory kernel. A canonical fractional choice is a power-law kernel
\[
C(t)=\frac{\gamma_\nu}{\Gamma(1-\nu)}\,t^{-\nu}\quad (t>0,\;0<\nu<1),
\]
as discussed e.g. in  \cite{lutz2001fractional}. If one even-extends the kernel again, then its power spectrum has the closed form
\begin{align*}
S(\omega):= &
2\int_{0}^{\infty}C(t)\cos(\omega t)\,dt =2\gamma_\nu\,\sin\!\Big(\frac{\pi\nu}{2}\Big)\,|\omega|^{\nu-1}.
\end{align*}
Consequently, for \(0<\nu<1\), \(S(\omega)\sim |\omega|^{\nu-1}\) diverges as \(\omega\to 0\). In the notation of the main text, this corresponds to a spectral singularity of order $\alpha=\nu-1\in(-1,0)$ at $\omega=0$. Therefore the FLE falls into the strong-singularity regime in our complexity theory; in particular, our $L^1$ bounds allow $N$ to grow at most polylogarithmically in $T$, e.g. $N=O\!\left(\log^2\!\left(T/\varepsilon\right)\right)$ for fixed target accuracy.

\section{Numerical experiments in Fig. \ref{fig:n_vs_t}}
\label{sec:numerics}

Finally, we provide details for the numerical experiments in \cref{fig:n_vs_t}. Here we fit the BCF corresponding to the Ohmic density $J(\omega) = \omega \mathrm e^{-\omega} 1_{[0,+\infty)}$ at zero temperature. The BCF has the closed-form expression $\Delta(t) = \frac{1}{(1+\mathrm i t)^2}$. The fit uses the ESPRIT algorithm \cite{RoyKailath1989}. The time interval $[0,T]$ is discretized uniformly with step size $\Delta t = 0.01$, and the $L^1$ error is approximated by the Riemann sum $\sum_{k=0}^{T/\Delta t} |\Delta(k\Delta t) - \sum_{j=1}^N c_j e^{-\mathrm i z_j k \Delta t}| \Delta t$. The number of poles $N$ is chosen as the smallest integer such that the approximated $L^1$ error is below the target accuracy $\varepsilon$.

\end{document}